\documentclass[a4paper,UKenglish,cleveref, autoref, thm-restate]{lipics-v2021}

\pdfoutput=1 
\hideLIPIcs  

\usepackage{todonotes}
\usepackage{algorithm}
\usepackage{algpseudocode}

\usepackage{rotating}
\usepackage{booktabs}

\newcommand{\End}{\mathtt{e}}
\newcommand{\Begin}{\mathtt{b}}

\newcommand{\Cand}{\mathtt{Candt}}
\newcommand{\FullCand}{\mathtt{FullL}}
\newcommand{\Weight}{\mathtt{w}}
\newcommand{\RowList}{\mathtt{rowList}}
\newcommand{\Result}{\mathtt{result}}
\newcommand{\Map}{\mathtt{MAP}}

\newcommand{\SA}{\mathtt{SA}}
\newcommand{\LCP}{\mathtt{LCP}}
\newcommand{\BWT}{\mathtt{BWT}}
\newcommand{\posprevWeight}{\mathtt{prevW}}
\newcommand{\POS}{\mathtt{POS}}

\newcommand{\CurrB}{\mathtt{curr\_b}}

\title{Constructing Suffixient Arrays Revisited} 

\author{Paola Bonizzoni}{Department of Computer Science, University of Milano-Bicocca, Italy}{paola.bonizzoni@unimib.it}{https://orcid.org/0000-0001-7289-4988}{}

\author{Younan Gao}{Department of Computer Science, University of Milano-Bicocca, Italy}{younan.gao@unimib.it}{https://orcid.org/0000-0003-4984-2551}{}

\author{Brian Riccardi}{Department of Computer Science, University of Milano-Bicocca, Italy}{brian.riccardi@unimib.it}{https://orcid.org/0000-0002-4925-9529}{}

\authorrunning{Bonizzoni et al.} 

\Copyright{Paola Bonizzoni, Younan Gao, and Brian Riccardi} 

\ccsdesc[500]{Theory of computation~Data structures design and analysis}

\keywords{Suffixient set, suffixient array, right-maximal substring, linear-time algorithm} 

\category{} 

\relatedversion{} 

\funding{All authors have received funding from the European Union’s Horizon 2020 research and innovation programme under the Marie Sk{\l}odowska-Curie grant agreement PANGAIA No.~872539, as well as from grant MIUR 2022YRB97K (PINC, \emph{Pangenome Informatics: From Theory to Applications}), funded by the European Union under the NextGenerationEU programme, Mission~4.}

\acknowledgements{}

\nolinenumbers 

\EventEditors{Philip Bille and Nicola Prezza}
\EventNoEds{2}
\EventLongTitle{37th Annual Symposium on Combinatorial Pattern Matching (CPM 2026)}
\EventShortTitle{CPM 2026}
\EventAcronym{CPM}
\EventYear{2026}
\EventDate{June 15--17, 2026}
\EventLocation{Copenhagen, Denmark}
\EventLogo{}
\SeriesVolume{369}
\ArticleNo{33}

\begin{document}

\maketitle

\begin{abstract}
Recently, Cenzato et al.\ proposed a new text index, called the \emph{suffixient array}, which is a subset of the suffix array and supports locating a single pattern occurrence or finding its maximal exact matches (MEMs), assuming random access to the input text $T[1..n]$ is available.
They show that, given the suffix array, the longest common prefix array, and the Burrows--Wheeler transform ($\BWT$) of the reverse of $T[1..n]$ over an alphabet $\{1,\ldots,\sigma\}$, a suffixient array can be constructed in linear time.
However, their construction algorithms require multiple scans of these arrays.
When restricted to a single pass over the arrays, they present an alternative construction algorithm running in $\mathcal O(n + \overline{r} \log \sigma)$ time, where $\overline{r}$ is the number of runs in the $\BWT$ of the reversed text.
In this paper, we present a new one-pass algorithm that constructs a suffixient array in linear time under the standard RAM model.
\end{abstract}

\newpage
\section{Introduction}
\label{sec:intro}

Pattern matching is a fundamental problem with applications ranging from text processing to computational biology. The \emph{suffix array}~\cite{ManberMyers1993} is a textbook data structure that supports efficient pattern matching by storing the starting positions of all suffixes of a text in lexicographical order. Together with auxiliary structures such as the \emph{longest common prefix array}~\cite{ManberMyers1993}, the suffix array enables fast pattern searches and serves as the basis of many full-text indexes.
However, the limitations of suffix arrays become apparent when dealing with massive texts, such as collections of human genomes. Since a suffix array requires linear space in the text length, storing and processing it for such datasets is often infeasible in practice.

At the same time, genomic data exhibits a high degree of similarity and repetitiveness: genomes from different individuals share the vast majority of their sequences, with variations occurring only at relatively sparse locations. This strong repetitiveness suggests that designing space-efficient indexes whose space usage scales with intrinsic measures of repetitiveness, rather than with the raw text length~$n$, is both natural and highly desirable.

\emph{Suffixient arrays}~\cite{cenzato2025} were recently proposed as a space-efficient alternative to suffix arrays for indexing highly similar texts such as genome sequences. Intuitively, a suffixient array can be viewed as a carefully selected subset of the \emph{prefix array}, the symmetric counterpart of the suffix array. Its definition is grounded in the notion of \emph{right-maximal substrings}, that is, substrings that occur in the text and can be extended to the right by at least two distinct characters. Such substrings correspond exactly to the internal nodes of the suffix tree. The suffixient array captures these branching points by including prefixes of the text that cover all one-character right-maximal extensions, ensuring that the structure is sufficient to represent all essential distinctions induced by right-maximal extensions.

Despite its reduced size, a suffixient array supports a specialized set of queries, provided random access to the underlying text: it can be used to either locate a single occurrence of a given pattern or to find \emph{maximal exact matches} (MEMs), which are of particular importance in bioinformatics applications~\cite{cenzato2025}.

In this paper, we focus on the efficient construction of suffixient arrays. Unlike suffix arrays, which are uniquely determined for a given text, a text may admit multiple valid suffixient arrays. Our goal is to compute any one of them.

\subparagraph*{Related Work.}  
Recently, Navarro et~al.~\cite{navarro2025smallest} introduced an \emph{online construction algorithm} that computes a \emph{minimum-size suffixient set}---the set of indices forming a suffixient array---directly from the text. Their approach adapts Ukkonen’s algorithm~\cite{ukkonen1995line}, originally designed for the online construction of the suffix \emph{tree}. Specifically, their method incrementally maintains a minimum-size suffixient set for the prefix $T[1..i]$ as the text is scanned left to right.  

The algorithm runs in linear time over an alphabet of size $\sigma$ under the trans-dichotomous RAM model~\cite{fredman1993surpassing}, but requires $\mathcal O(n \log \sigma)$ time in the standard RAM model~\cite{aho1974design}. Its working space is comparable to that of a suffix tree built over the text, which is known to have a large memory footprint in practice---up to 20 bytes per input character in the worst case~\cite{kurtz1999reducing}. Navarro et~al.\ also demonstrated that a minimum-size suffixient set can be constructed incrementally while scanning the text from right to left.


Cenzato et~al.~\cite{cenzato2025} addressed the problem of constructing a minimum-size suffixient set given the suffix array $\SA[1..n]$, the Burrows--Wheeler transform (BWT)~\cite{BurrowsWheeler1994} array $\BWT[1..n]$, and the longest common prefix array $\LCP[1..n]$ of the reversed text. They presented two linear-time algorithms that require multiple passes over these arrays, resulting in a working space of $\Omega(n)$ words.
They also proposed a construction algorithm that avoids storing the full $\SA$, $\BWT$, and $\LCP$ arrays and processes them in a single pass. This algorithm runs in $\mathcal O(n + \overline{r}\log \sigma)$ time, where $\overline{r}$ denotes the number of runs in the BWT, and uses only $\mathcal O(\sigma)$ words of working space. When computing a suffixient array, the running time remains unchanged, while the working space increases from $\mathcal O(\sigma)$ to $\mathcal O(\sigma + \chi) \subseteq \mathcal O(\chi)$ words, where $\chi$ denotes the size of the suffixient array and $\sigma \le \chi$~\cite{cenzato2025}.

Using the technique of \emph{prefix-free parsing} (PFP)~\cite{Boucher2019PFP, kuhnle2020efficient, rossi2022moni}, the entries of the $\SA$, $\BWT$, and $\LCP$ arrays can be generated in a left-to-right streaming fashion while using compressed space. By combining their one-pass algorithm with PFP, Cenzato et~al.~\cite{cenzato2025} were able to construct a suffixient array in compressed space.

\subparagraph*{Our Results.}
In this paper, we revisit the problem of constructing a suffixient array in the standard RAM model, focusing on the specific setting introduced by Cenzato et~al.~\cite{cenzato2025}.
Our main contribution is a linear-time streaming algorithm that constructs a suffixient array in $n$ iterations by processing the $\SA$, $\LCP$, and $\BWT$ arrays of the reversed text $T[1..n]$ in a single pass with one-step look-ahead.
The efficiency of the algorithm relies on two key mechanisms.

\textit{Dynamic Candidate Management.}
At each iteration $i$, we maintain a set of candidate positions $j_{\text{text}}$ for the suffixient array in a doubly-linked list ($\RowList$).
The list has size at most $\sigma$ and is kept sorted in non-increasing order by the weight $\Weight(j)$.
Here, $\Weight(j)$ denotes the maximum length of a right-maximal substring that is a suffix of $T[1..j_{\text{text}}-1]$, where $j = \SA^{-1}[n-j_{\text{text}}+1]$. The array $\SA^{-1}$ denotes the inverse suffix array, defined by $\SA^{-1}[\SA[k]] = k$.

To decide at iteration $i$ whether the new position $i_{\text{text}}=n-\SA[i]+1$ should replace or join the existing candidates, we use an auxiliary array $\posprevWeight[1..\sigma]$ that records, for each character, the most recent weighted occurrence (its $\mathit{index}$) together with the corresponding $\mathit{weight}$.
In addition, we employ a monotone stack to compute in amortized $\mathcal O(1)$ time $\Begin(i)$, the largest index $i'<i$ such that $\LCP[i']<\LCP[i]$.
By comparing $\Weight(i)$ with $\LCP[i]$, $\posprevWeight[c].\mathit{index}$ with $\Begin(i)$, and $\posprevWeight[c].\mathit{weight}$ with $\Weight(i)$ (where $c=\BWT[i]$), we can determine whether the position $i_{\text{text}}$ is a superior candidate to those currently stored.
This strategy guarantees that, at any time, we retain only the most relevant candidate for each distinct character. \lipicsEnd

\textit{The Ejection Mechanism.}
As the algorithm streams through the arrays, the current $\LCP$ value, $\LCP[i]$, acts as a threshold.
When $\LCP[i]$ drops below the weight of a candidate in $\RowList$, this indicates that the corresponding right-maximal extension can no longer be extended.
The candidate \emph{position} is then \emph{ejected} from the doubly-linked list and appended to the output list, which stores a sublist of the suffixient array.
Since $\RowList$ is sorted, each ejection takes constant time, resulting in overall linear running time. \lipicsEnd

Our algorithm is asymptotically faster than the one-pass algorithm of Cenzato et~al.~\cite{cenzato2025}. Its working space is bounded by $\mathcal O(h + \chi)$, in addition to the $\SA$, $\LCP$, and $\BWT$ arrays, while computing a minimum-size suffixient set requires only $\mathcal O(h + \sigma)$ words of working space.
Here, $h$ denotes the height of the suffix tree built over the reversed text, that is, the maximum number of branching nodes along any root-to-leaf path. 
Since the algorithm processes these arrays in a one-pass streaming fashion, it naturally integrates with prefix-free parsing~\cite{Boucher2019PFP}.

\section{Preliminaries}

Our model of computation is a random access machine (RAM) endowed with comparison operations and basic arithmetic operations, including only addition and subtraction \cite{aho1974design}.

Let $\Sigma=\{1, \dots, \sigma\}$ be the alphabet set.
We assume that the text $T[1..n]$ ends at the position $n$ with a special character $\$$ that never appears in $T[1..n-1]$.
So, the alphabet size is at least two.
Without loss of generality, we further assume that $\sigma\le n$.

For any string $S$, let $(S)^{rev}$ denote its reverse. We define the reversed text $T^{rev}[1..n]$ as a specific construction distinct from $(T)^{rev}$; while $(T)^{rev}$ simply reverses the entire sequence, $T^{rev}$ preserves the sentinel $\$$ at the final position.
Formally, $T^{rev}[n] = \$$ and $T^{rev}[i] = T[n-i]$ for $1 \le i < n$.
Let $\SA[1..n]$, $\BWT[1..n]$, and $\LCP[1..n]$ denote the suffix array, the Burrows--Wheeler transform, and the Longest Common Prefix array of the reversed text $T^{rev}$, respectively.
Specifically, $\SA[1..n]$ is a permutation of the indices $\{1, \dots, n\}$ such that the suffixes $T^{rev}[\SA[i]..n]$ are arranged in lexicographical order for $1 \le i \le n$.
And $\BWT[i] = \$ $ if $\SA[i] = 1$; otherwise, $\BWT[i] = T^{rev}[\SA[i]-1] = T[n-\SA[i]+1]$. Moreover, $\LCP[1] = -1$, and for $i > 1$, $\LCP[i]$ is the length of the longest common prefix of $(T[1..n-\SA[i]])^{rev}$ and $(T[1..n-\SA[i-1]])^{rev}$.

Given two strings $\alpha$ and $\beta$, we say that $\alpha$ is \emph{co-lexicographically smaller} than $\beta$ if and only if one of the following holds: i) there exists an index $k$ such that $\alpha[|\alpha|-i+1]=\beta[|\beta|-i+1]$ for all $1\le i <k$, and $\alpha[|\alpha|-k+1]<\beta[|\beta|-k+1]$; ii) or $\alpha$ is a proper suffix of $\beta$.

Throughout this paper, we reserve the term \emph{position} for locations in the text $T[1..n]$, and the term \emph{index} for locations in the arrays $\SA$, $\BWT$, and $\LCP$.
For example, we use the symbol $j_{\text{text}}$ to denote a position in $T[1..n]$, and write $j$ (without the subscript ``text'') to denote an index in these arrays.
Consequently, for $1 \le j \le n$ we have $j_{\text{text}} = n - \SA[j] + 1$, and for $1 \le j_{\text{text}} \le n$ we have $j = \SA^{-1}[\,n - j_{\text{text}} + 1\,]$.


\begin{definition}[Right-maximal substrings and one-character right-maximal extension~\cite{cenzato2025}]
For a text $T[1..n]$ containing at least two distinct characters, including $\$$, a substring $T[i_{\text{text}}.. j_{\text{text}}]$ $(j_{\text{text}}\ge i_{\text{text}}-1)$ is a \emph{right-maximal substring} if there exist at least two distinct characters $a, b \in \Sigma$ such that both $T[i_{\text{text}}.. j_{\text{text}}] \cdot a$ and $T[i_{\text{text}}.. j_{\text{text}}] \cdot b$ are substrings of $T$. 
For any right-maximal substring $str$, we call $str \cdot c$ for $c \in \Sigma$ a \emph{one-character right-maximal extension} of $str$ if $str \cdot c$ is a substring of $T[1..n]$.
\end{definition}

Note that the empty string is always a right-maximal substring, since $\sigma\ge 2$.

\begin{definition}[Suffixient set~\cite{cenzato2025}]\label{def-suffixient-set}
A set $\mathcal{X}\subseteq \{1,\dots, n\}$ is \emph{suffixient} for a text $T[1..n]$ if, for every one-character right-maximal extension $T[i_{\text{text}}.. j_{\text{text}}]$ $(j_{\text{text}}\ge i_{\text{text}})$ of every right-maximal string $T[i_{\text{text}}.. j_{\text{text}}-1]$, there exists a position $x_{\text{text}}\in \mathcal{X}$ such that $T[i_{\text{text}}.. j_{\text{text}}]$ is a suffix of $T[1.. x_{\text{text}}]$.
\end{definition}

Since $T[n] = \$$ and $\$$ does not appear in $T[1..n-1]$, any suffixient set must contain $n$.


\begin{definition}[Suffixient array~\cite{cenzato2025}]
A \emph{suffixient array} $sA$ of a text $T[1..n]$ is a minimum-size suffixient set for $T$, whose elements are ordered so that the prefixes $T[1..sA[1]], T[1..sA[2]], \dots$ appear in colexicographic order.
\end{definition}




\subparagraph*{Computing the \(\Weight(\cdot)\) values.}
Let $x \in [1..n]$ be an index. We say that $x$ is a \emph{start-run boundary} if $x > 1$ and $\BWT[x] \neq \BWT[x-1]$, and an \emph{end-run boundary} if $x < n$ and $\BWT[x] \neq \BWT[x+1]$. Any index that is either a start- or end-run boundary is called a \emph{run boundary}.

To every index $x \in [1..n]$ we assign a \emph{weight} \(\Weight(x)\), defined as the maximum length of a right-maximal substring that is a suffix of \(T[1..n - \SA[x]]\), if $x$ is a run-boundary, and $-1$, otherwise. Note that if $x$ is a
run-boundary, then a right-maximal substring ending at position $n-\SA[x]$ surely
exists, hence $\Weight(x)$ is well-defined.

The value \(\Weight(x)\) can then be computed as follows: if \(x\) is not a run boundary, then we set \(\Weight(x) := -1\), as per the definition. If \(x\) is only a start-run (resp., only an end-run) boundary, then \(\Weight(x) = \LCP[x]\) (resp., \(\Weight(x) = \LCP[x+1]\)). If \(x\) is both a start- and end-run boundary, then \(\Weight(x) = \max\{\LCP[x], \LCP[x+1]\}\).
Each value can be computed in $\mathcal O(1)$ time using $\mathcal O(1)$ working space. The pseudocode is deferred to Appendix~\ref{app-pseudo-weight}.

\subparagraph*{Computing the $\Begin(\cdot)$ values.} For any $x \in [1..n]$, define $\Begin({x})$ as the largest index $\Begin({x}) < {x}$ for which $\LCP[\Begin({x})] < \LCP[{x}]$. If no such index exists, set $\Begin({x}) := 1$. Define $\End({x})$ as the smallest index $\End({x}) > {x}$ such that $\LCP[\End({x})] < \LCP[{x}]$. If no such index exists, set $\End({x}) := n+1$.
The value    $\End({x})$ is only used conceptually throughout this paper, so we only show an algorithm that computes $\Begin({x})$.

In our setting, the entries in $\LCP[1..n]$ are enumerated sequentially from $\LCP[1], \LCP[2], \dots$ in a stream; immediately after reading the entry $\LCP[i]$ for each $i>1$, our goal is to compute the index $\Begin(i)$ in amortized constant time.

Our data structure is a regular stack that stores tuples of form $(index, val, b\_val)$, where $index \in [1..n]$, $val=\LCP[index]$, and $b\_val=\Begin(index)$.
We call this stack \emph{monotone stack} as the tuples in the stack are always sorted decreasingly by the $val$ entries from top to the bottom.
Initially, we create an empty stack $S$ and push the triple $(1, -1, -1)$ onto $S$.


When $\LCP[i]$ is available, we first check if or not the tuple at the top of $S$ has $index = i$. If so, then we return the $b\_val$ value of the top tuple; this means that $\Begin(i)$ has already been computed before (note that in our setting, when $\LCP[i]$ is available, we might call $\Begin(i)$ more than once). 
Otherwise, we pop every tuple in the stack with $val\ge \LCP[i]$ out of $S$, until we find the tuple $t$ with $t.val<\LCP[i]$.
Then, we return $t.index$ right after we push the tuple $(i, \LCP[i], t.index)$ onto the top of $S$.
The pseudocode can be found in Appendix~\ref{app-pseudo-begin}.

Since the tuple associated with each $\LCP[i]$ is pushed onto $S$ exactly once and popped from $S$ at most once, the overall running time is $\mathcal O(n)$. Consequently, $\Begin(i)$ can be computed in amortized constant time. The correctness is standard and omitted; a similar idea has appeared in \cite[Theorem~2]{fischer2009faster} and \cite[Section 4]{abouelhoda2004replacing} for a related purpose.

Let $|S|_i$ denote the size of the stack immediately after processing $\LCP[i]$.  By Proposition~\ref{prop} (below), $\max\{ |S|_i \mid 1 < i \le n \}$ is upper bounded by the height of the \emph{suffix tree} \cite{weiner1973linear} built over $T^{rev}[1..n]$, plus one. 
In particular, if $T[1..n]=a^n$, then $|S|\subseteq \Omega(n)$ , and if the text is uniform, then $|S|\subseteq \mathcal O(\log n)$ with high probability~\cite{devroye1992note}.
However, as noted in \cite[Section 5.1]{abouelhoda2004replacing}, the stack size in practice is much smaller.

\begin{proposition}\label{prop}
  During the sequential execution of $\Begin(i)$ for $i =  2,3, \dots, n$, the size of the monotone stack is bounded by $h+1$ in the worst case, where $h$ denotes the height of the suffix tree constructed over the reverse of the input text, that is the maximum number of branching nodes on any root--to--leaf path. 
\end{proposition}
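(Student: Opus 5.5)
The plan is to describe the stack contents precisely and then map them to branching nodes of the suffix tree of $T^{rev}$. First we establish the stack invariant. Immediately after processing $\LCP[i]$, the stack holds $(1,-1,-1)$ at the bottom and then tuples with indices $i_1<i_2<\dots<i_k=i$ whose $val$ entries satisfy $-1<\LCP[i_1]<\LCP[i_2]<\dots<\LCP[i_k]$. These are strictly increasing from bottom to top: we pop every tuple with $val\ge \LCP[i]$ before pushing. Moreover, each $i_t$ is exactly the ``left-to-right minima seen from $i$'', that is, $\LCP[i_t]<\LCP[x]$ for all $i_t<x\le i$ with $x\ne i_t$ in that range. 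This follows by a routine induction on $i$ from the pop rule. So apart from the sentinel there are at most as many stacked tuples as there are distinct values in the chain $\LCP[i_1]<\dots<\LCP[i_k]$.

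Next we interpret each value $\ell_t=\LCP[i_t]$ for $t\ge 1$ as the string depth of a branching node on the root-to-leaf path of the suffix $T^{rev}[\SA[i]..n]$. Let $\alpha_t$ be the prefix of length $\ell_t$ of the suffix at index $i$. By the invariant, every $\LCP[x]$ for $i_t<x\le i$ is at least $\ell_t$. Hence the suffixes at indices $i_t-1,i_t,\dots,i$ all share the prefix $\alpha_t$, and in particular $\SA[i]$'s suffix starts with $\alpha_t$. Also $\LCP[i_t]=\ell_t$ says the suffixes at $i_t-1$ and $i_t$ differ right after $\alpha_t$. So $\alpha_t$ is followed by two distinct characters, the sentinel $\$$ being treated as a character. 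Thus $\alpha_t$ is right-maximal in $T^{rev}$, i.e.\ it is the label of a branching (internal) node, or the root when $\ell_t=0$, on the path to the leaf of $\SA[i]$.

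Since the $\ell_t$ are strictly increasing, the nodes $\alpha_1,\dots,\alpha_k$ are pairwise distinct ancestors of that leaf. Therefore $k$ is at most the number of branching nodes on this root-to-leaf path, which is at most $h$. Adding the sentinel tuple gives $|S|_i\le h+1$.

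The main obstacle is the bookkeeping around $i_1$ and $\LCP$ values equal to $0$, i.e.\ whether the root is counted among the branching nodes and how the sentinel interacts with it. If $h$ counts only the branching nodes and the root is branching (it always is, since $\sigma\ge 2$), the argument above goes through directly. Otherwise we would note that at most one stacked tuple can have $val=0$ and absorb it against the $+1$. We also need to check that the index-$1$ sentinel with $-1$ is never popped, since $\LCP[i]\ge 0$ for $i>1$. Finally, we need that $i_t-1\ge 1$ always holds, because $i_t\ge 2$ for non-sentinel tuples.
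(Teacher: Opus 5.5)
Your argument is correct and is essentially the paper's proof. Both identify each non-sentinel stacked index $i_t$ with the suffix-tree node of string depth $\LCP[i_t]$ (the paper calls it the lowest common ancestor of the leaves $\SA[i_t-1]$ and $\SA[i_t]$), and both use the strictly increasing stacked $\LCP$ values to place these distinct nodes on a single root-to-leaf path. Drop your fallback of absorbing a depth-$0$ tuple into the $+1$: that $+1$ is already used by the sentinel $(1,-1,-1)$, so $h+1$ requires counting the root as a branching node, as the paper's definition does (for $n=2$ the stack holds two tuples while the root is the only branching node).
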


\begin{proof}
Let $Trie$ denote the suffix tree constructed over $T^{rev}[1..n]$.
For each $i$, let $A_i$ denote the list of all index entries, excluding $1$, of the triples stored in the monotone stack $S$ (in increasing order) upon completion of the execution of $\Begin(i)$.
Thus, $|A_i| = |S|_i - 1$.

Consider any $j', j'' \in A_i$ with $j' < j''$, and any $s \in (j'..j'']$.
Following the algorithm, we have $\LCP[j'] < \LCP[s]$; otherwise, the triple indexed by $j'$ would have been popped from the stack.
Consequently, $T^{rev}[\SA[j']..\SA[j']+\LCP[j']-1]$ is a prefix of $T^{rev}[\SA[j'']..\SA[j'']+\LCP[j'']-1]$.

For each $j \in A_i$, let $u_j$ denote the node in $Trie$ that is the lowest common ancestor of the $\SA[j-1]$-leaf and the $\SA[j]$-leaf.
Observe that $u_j$ is the locus of the string $T^{rev}[\SA[j]..\SA[j]+\LCP[j]-1]$.
Hence, $u_{j'}$ is an ancestor of $u_{j''}$, and the nodes $\{u_j \mid j \in A_i\}$ all lie on a single root-to-leaf path in $Trie$.

It follows that $|A_i| \le h$.
Since $|A_i| = |S|_i - 1$, we conclude that $|S|_i \le h + 1$.
\end{proof}






\section{A Smallest Suffixient Set: A New Characterization and Proof}
\label{sect-characterize}
In this section, we characterize a minimum-size suffixient set based the concepts of $\Weight(\cdot), \Begin(\cdot),$ and $\End(\cdot)$ and prove its correctness.


\begin{definition}[$\Cand$]\label{def-cand}
Let $1 \le a < a' \le n$ and $c \in \Sigma$.
Define $\Cand_c(a,a')$ as the smallest index $p \in [a..a')$ with $\BWT[p]=c$ and
$\Weight(p) = \max\{\Weight(x) \mid a \le x < a',\ \BWT[x] = c,\ \Weight(x) \ge 0\}$;
if no such index exists, set $\Cand_c(a,a') := -1$.
\end{definition}



By Definition~\ref{def-cand}, for any index $x$ with $\BWT[x] \neq \BWT[x-1]$ and any $c \in \{\BWT[x], \BWT[x-1]\}$, it follows that $\Cand_c(\Begin(x), \End(x)) \neq -1$.  
Indeed, both $x$ and $x-1$ lie within the interval $[\Begin(x), \End(x))$, and $\Weight(x-1) \ge 0$ as well as $\Weight(x) \ge 0$.  
Observation~\ref{obs-weight-lcp} will be used throughout the remainder of the paper.

\begin{observation}\label{obs-weight-lcp}
For any index $1 < x \le n$ such that $\BWT[x] \neq \BWT[x-1]$, and any $c \in \{\BWT[x], \BWT[x-1]\}$, we have $\Weight(\ell_{x,c}) \ge \LCP[x]$, where $\ell_{x,c} = \Cand_c(\Begin(x), \End(x))$.
\end{observation}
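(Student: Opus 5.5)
Since $\ell_{x,c}$ maximizes $\Weight$ among the indices $p\in[\Begin(x)..\End(x))$ with $\BWT[p]=c$ and $\Weight(p)\ge 0$, it is enough to find one such index $p$ with $\Weight(p)\ge \LCP[x]$. The natural witness is $x$ itself when $c=\BWT[x]$, and $x-1$ when $c=\BWT[x-1]$. Both indices lie in $[\Begin(x)..\End(x))$, because $\Begin(x)\le x-1<x<\End(x)$.

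\emph{Case $c=\BWT[x]$.} Since $\BWT[x]\neq\BWT[x-1]$, the index $x$ is a start-run boundary. By the rule for computing weights, $\Weight(x)$ equals either $\LCP[x]$ (if $x$ is only a start-run boundary) or $\max\{\LCP[x],\LCP[x+1]\}$ (if it is both a start- and end-run boundary). In either case $\Weight(x)\ge\LCP[x]$.

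\emph{Case $c=\BWT[x-1]$.} Now $x-1$ is an end-run boundary, since $x-1<n$ and $\BWT[x-1]\neq\BWT[x]$. Its weight is therefore $\LCP[x]$ or $\max\{\LCP[x-1],\LCP[x]\}$, so again $\Weight(x-1)\ge\LCP[x]$.

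The claim then follows from maximality: $\Weight(\ell_{x,c})\ge\Weight(p)\ge\LCP[x]$, where $p\in\{x,x-1\}$ is the witness for the relevant case.

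The one step needing care is that the weight formula (max of the adjacent $\LCP$ values at run boundaries) agrees with the definition of $\Weight$ as the longest right-maximal suffix of $T[1..n-\SA[x]]$. The paper states this formula as given, so I rely on it. If a direct check were needed: the reversed suffixes at $x-1$ and $x$ share a common prefix of length $\LCP[x]$ and are preceded by different $\BWT$ characters $\BWT[x-1]\neq\BWT[x]$. Hence the string of length $\LCP[x]$ ending at position $n-\SA[x]$ in $T$ is followed by two distinct characters in $T$, so it is right-maximal. The same string ends at position $n-\SA[x-1]$. This gives the lower bound $\LCP[x]$ on the weight directly, for both $x$ and $x-1$.
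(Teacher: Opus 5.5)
Your proof is correct and follows essentially the same route as the paper's. Both take the witness $\ell\in\{x,x-1\}$ with $\BWT[\ell]=c$, show $\Weight(\ell)\ge\LCP[x]$ and $\ell\in[\Begin(x),\End(x))$, and conclude by the maximality built into $\Cand_c$. Your direct check that the length-$\LCP[x]$ suffix shared by both prefixes is right-maximal also justifies the step the paper states only as ``by the definition of $\Weight(\cdot)$''.
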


\begin{proof}
Since $\BWT[x] \neq \BWT[x-1]$ and $c \in \{\BWT[x], \BWT[x-1]\}$, Definition~\ref{def-cand} implies that $\ell_{x,c} = \Cand_c(\Begin(x), \End(x)) > -1$ and that $\Weight(\ell_{x,c}) = \max\{\Weight(j) \mid \Begin(x) \le j < \End(x),\ \BWT[j] = c\}$.

Let $\ell \in \{x, x-1\}$ be the index with $\BWT[\ell] = c$. By the definition of $\Weight(\cdot)$, it follows that $\Weight(\ell) \ge \LCP[x]$. Moreover, since $\ell \in [\Begin(x), \End(x))$ by the definitions of $\Begin(\cdot)$ and $\End(\cdot)$, and $\BWT[\ell] = c$, the maximality of $\Weight(\ell_{x,c})$ implies $\Weight(\ell_{x,c}) \ge \Weight(\ell) \ge \LCP[x]$.
\end{proof}




Our goal is to show that $\FullCand$, defined in Definition~\ref{def-ell-x-c}, is a minimum-size suffixient set, thereby providing a new characterization of such sets in terms of $\Weight(\cdot)$, $\Begin(\cdot)$, and $\End(\cdot)$. To this end, one possible approach is to establish a bijection between $\FullCand$ and the set of all \emph{super-maximal extensions}, as defined in \cite[Definition~32]{cenzato2025}. This approach implies, as demonstrated in \cite[Section~5]{cenzato2025}, that $\FullCand$ is a minimum-size suffixient set. Instead, we present a new proof that does not rely on the notion of super-maximal extensions.


\begin{definition}[$\FullCand$]\label{def-ell-x-c}
Define $\FullCand = \{n-\SA[p]+1 \mid p=\Cand_c(\Begin(x), \End(x)), 1 < x \le n, \BWT[x]\ne \BWT[x-1], c \in \{\BWT[x], \BWT[x-1]\}\}$.
\end{definition}


We first prove that $\FullCand$ is suffixient for the text $T[1..n]$, applying Definition~\ref{def-suffixient-set}.

\begin{lemma}\label{lem-suffixient}
The set $\FullCand$ is suffixient for the text $T[1..n]$.
\end{lemma}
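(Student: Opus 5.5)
Fix a right-maximal substring $\alpha$ of $T$ and a character $c$ with $\alpha c$ a substring of $T$. The plan is to translate this extension into the $\BWT$/$\LCP$ of $T^{rev}$ and exhibit a position of $\FullCand$ whose prefix $T[1..x_{\text{text}}]$ ends with $\alpha c$.

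\textbf{Translation.} The occurrences of $\alpha$ in $T$, read backwards, are occurrences of $\alpha^{rev}$ as a prefix of suffixes of $T^{rev}$. Hence they correspond to a maximal interval $[s..t]$ of indices whose suffixes $T^{rev}[\SA[k]..n]$ have prefix $\alpha^{rev}$. For an index $k$ in this interval, $\BWT[k]=T[n-\SA[k]+1]$ is exactly the character following that occurrence of $\alpha$ in $T$, i.e.\ the prefix $T[1..n-\SA[k]+1]$ ends with $\alpha\cdot\BWT[k]$. Right-maximality of $\alpha$ means $\BWT[s..t]$ contains at least two distinct characters, and $\alpha c$ occurring means some $k\in[s..t]$ has $\BWT[k]=c$. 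The case $\alpha$ empty, where the interval is $[1..n]$, is handled in the same way.

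\textbf{Choosing $x$.} Since $\BWT[s..t]$ is not constant and contains $c$, there is an index $x\in(s..t]$ with $\BWT[x]\neq\BWT[x-1]$ and $c\in\{\BWT[x],\BWT[x-1]\}$. To find it, take an index $k$ with $\BWT[k]=c$ and move towards the nearest index carrying a different character; the first change of character gives $x$. Because $x-1,x\in[s..t]$, we have $\LCP[x]\ge|\alpha|$. Then $\LCP[y]\ge|\alpha|$ for all $y\in(s..t]$, while $\LCP[s]<|\alpha|$ and $\LCP[t+1]<|\alpha|$ (or the interval meets the array ends). Using the definitions of $\Begin(\cdot)$ and $\End(\cdot)$, I will show
\[
[\Begin(x)..\End(x)) \subseteq [s..t].
\]
The index $\Begin(x)$ itself is the delicate boundary: $\Begin(x)$ is an index with $\LCP[\Begin(x)]<\LCP[x]$, and it may lie at $s$ itself with $\LCP[s]<|\alpha|$. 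The interval $[\Begin(x)..\End(x))$ starts at $\Begin(x)$, which is the first suffix of the block with common prefix of length $\LCP[x]\ge|\alpha|$. So it is still inside $[s..t]$, since $\LCP[s]<|\alpha|\le\LCP[x]$ forces $\Begin(x)\ge s$. Symmetrically, $\End(x)\le t+1$. This is the step I expect to need the most care, including the edge convention $\Begin(x)=1$.

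\textbf{Conclusion.} Let $p=\Cand_c(\Begin(x),\End(x))$. By the definition of $\Cand_c$ (and the remark preceding Observation~\ref{obs-weight-lcp}), $p\neq -1$, $\BWT[p]=c$, and $p\in[s..t]$. Hence $T[1..n-\SA[p]]$ ends with $\alpha$, and $T[n-\SA[p]+1]=c$. The position $x_{\text{text}}=n-\SA[p]+1$ therefore lies in $\FullCand$ by Definition~\ref{def-ell-x-c}, and $\alpha c$ is a suffix of $T[1..x_{\text{text}}]$, which is the requirement of Definition~\ref{def-suffixient-set}. Note that this argument does not need Observation~\ref{obs-weight-lcp}: membership of $p$ in the interval suffices. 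The weights only matter later, for minimality.
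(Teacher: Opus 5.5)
Your proof is correct and takes essentially the paper's route. Inside the SA-interval of $\alpha$ you find a start-run boundary $x$ adjacent to an occurrence of $c$ (the paper gets it from a minimal interval containing both $a$ and $c$), and you take $\Cand_c(\Begin(x),\End(x))$. Your explicit containment $[\Begin(x)..\End(x))\subseteq[s..t]$ is the cleaner justification that the chosen prefix ends with $\alpha c$: the paper's chain $\Weight(q)\ge\Weight(z)\ge\LCP[z]\ge|str|$ only yields that conclusion once combined with exactly this interval-membership fact.
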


\begin{proof}
By Definition~\ref{def-suffixient-set}, it suffices to show the following:
for any right-maximal substring $str$ of $T[1..n]$ and for any one-character
right-maximal extension $str \cdot c$ of $str$, there exists a position
$x_{\text{text}} \in \FullCand$ such that $str \cdot c$ is a suffix of
$T[1..x_{\text{text}}]$.

Since $str$ is right-maximal in $T[1..n]$, there exists another character $a \neq c$
such that $str \cdot a$ also occurs as a substring of $T[1..n]$.
Consequently, there exists at least one interval $[\ell, z]$ such that both characters $a$ and $c$ occur in $\BWT[\ell..z]$, and the prefix $T[1..n-\SA[i]]$
has $str$ as a suffix for every index $i \in [\ell, z]$.

Let $[\ell, z]$ be the smallest such interval.
Without loss of generality, assume that $\BWT[\ell] \neq c$ and $\BWT[z] = c$;
the case $\BWT[\ell] = c$ and $\BWT[z] \neq c$ is symmetric.
By the minimality of the interval $[\ell, z]$, we have
$\BWT[z] \neq \BWT[z-1]$.
Let $q := \Cand_c(\Begin(z), \End(z))$.
Since $\BWT[z] = c = \BWT[q]$ and $\BWT[z] \neq \BWT[z-1]$, it follows that
$\Weight(q) \ge \Weight(z) \ge \LCP[z] \ge |str|$; moreover, as $T[n-\SA[q]+1]=\BWT[q]=c$, we have $str \cdot c$ is a suffix of $T[1..n-\SA[q]+1]$.
Setting $x_{\text{text}} := n-\SA[q]+1$, we conclude that
$x_{\text{text}} \in \FullCand$, completing the proof.
\end{proof}

\begin{lemma}\label{lem-suffix-impossible}
Define $\ell_{x,c} := \Cand_c(\Begin(x), \End(x))$ and $\ell_{x',c'} := \Cand_{c'}(\Begin(x'), \End(x'))$ for any two start-run boundaries $x$ and $x'$, and for any $c \in \{\BWT[x],\BWT[x-1]\}$ and $c' \in \{\BWT[x'], \BWT[x'-1]\}$, respectively.
If $\ell_{x,c} \neq \ell_{x',c'}$ and
$\Weight(\ell_{x,c}) \le \Weight(\ell_{x',c'})$, then
$T[n-\SA[\ell_{x,c}] + 1 - \Weight(\ell_{x,c})..\, n-\SA[\ell_{x,c}] + 1]$
cannot be a suffix of
$T[n-\SA[\ell_{x',c'}] + 1 - \Weight(\ell_{x',c'})..\, n-\SA[\ell_{x',c'}] + 1]$.
\end{lemma}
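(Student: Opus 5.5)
The plan is to argue by contradiction. Write $p:=\ell_{x,c}$, $p':=\ell_{x',c'}$, $w:=\Weight(p)$ and $w':=\Weight(p')$, so $w\le w'$, and assume the length-$(w+1)$ string $T[n-\SA[p]+1-w..n-\SA[p]+1]$ is a suffix of $T[n-\SA[p']+1-w'..n-\SA[p']+1]$. Both indices are genuine candidates, so $w,w'\ge 0$ and the strings are well defined. Their last characters are $\BWT[p]=c$ and $\BWT[p']=c'$, so the assumption gives $c=c'$. Removing the last character, the length-$w$ suffix of $T[1..n-\SA[p]]$ equals the length-$w$ suffix of $T[1..n-\SA[p']]$. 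Passing to $T^{rev}$, the suffixes starting at $\SA[p]$ and $\SA[p']$ share a prefix of length $w$. By the standard property of the $\LCP$ array this gives
\[
\LCP[k]\ge w \quad\text{for every } k\in(\min\{p,p'\}..\max\{p,p'\}].
\]

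The key step is to show that $p'$ lies in $[\Begin(x),\End(x))$. By Observation~\ref{obs-weight-lcp}, $w=\Weight(p)\ge \LCP[x]$, so every $\LCP[k]$ with $k$ strictly between the two indices, or equal to the larger one, is at least $\LCP[x]$. We know $p\in[\Begin(x),\End(x))$.
\begin{itemize}
\item If $p'\ge\End(x)$, then $\End(x)\in(p..p']$ and $\LCP[\End(x)]<\LCP[x]$, a contradiction.
\item If $p'<\Begin(x)$, then $\Begin(x)\in(p'..p]$. Here $\Begin(x)$ is a real index with $\LCP[\Begin(x)]<\LCP[x]$, since the default value $1$ cannot exceed $p'\ge 1$. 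This is again a contradiction.
\end{itemize}
Hence $p'\in[\Begin(x),\End(x))$, $\BWT[p']=c$ and $\Weight(p')=w'\ge w$. Maximality in Definition~\ref{def-cand} then forces $w'=w$, and since $p$ is the smallest maximizer, $p<p'$.

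Next I run the symmetric argument with the roles of $(x,c,p)$ and $(x',c',p')$ swapped. Now $w=w'\ge\LCP[x']$ by Observation~\ref{obs-weight-lcp} applied to $x'$. So the same interval argument places $p$ in $[\Begin(x'),\End(x'))$, with $\BWT[p]=c'$ and $\Weight(p)=w'$ the maximum. Minimality of $p'$ gives $p'<p$, contradicting $p<p'$ together with $p\neq p'$.

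The main obstacle I expect is the boundary bookkeeping. First, the $\LCP$ characterization must be stated with the right half-open ranges. Second, the default values $\Begin=1$ and $\End=n+1$ must be handled, including checking that $\Begin(x)\in(p'..p]$ genuinely has $\LCP<\LCP[x]$. Third, I must justify that the text-suffix assumption translates to $\LCP\ge w$ across the whole range between $p$ and $p'$, including when $w=0$, where it is trivial. The rest is a direct application of Definition~\ref{def-cand}.
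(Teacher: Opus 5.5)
Your proof is correct and takes essentially the same route as the paper. The paper phrases your LCP range-minimum argument as the containment of the SA-interval of the right-maximal substring of $\ell_{x,c}$ in $[\Begin(x),\End(x))$, obtained via Observation~\ref{obs-weight-lcp}, and then uses weight maximality to exclude $\Weight(\ell_{x,c})<\Weight(\ell_{x',c'})$ and the smallest-index tie-break to exclude equality; your derivation of $p<p'$ followed by $p'<p$ is a WLOG-free version of its equal-weight case.
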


\begin{proof}
By Definition~\ref{def-cand}, we have $\ell_{x,c} > -1$,
$\Weight(\ell_{x,c}) > -1$, $\ell_{x',c'} > -1$, and
$\Weight(\ell_{x',c'}) > -1$.

Any index $\ell$ with $\Weight(\ell) \ge 0$ induces the right-maximal substring
$T[n-\SA[\ell] + 1 - \Weight(\ell),\, n-\SA[\ell]]$, which cannot contain the
symbol $\$$.
Recall that the suffix array $\SA$ is constructed over $T^{rev}[1..n]$.
Therefore, the reverse of a right-maximal substring induces an interval in the
suffix array, called its \emph{SA-interval} \cite[Section~2.2.2]{li2010fast}.
Formally, for any index $\ell$ with $\Weight(\ell) \ge 0$, let
$[s(\ell), t(\ell)]$ denote the SA-interval of the string
$T[n-\SA[\ell] + 1 - \Weight(\ell)..\, n-\SA[\ell]]$, where
$s(\ell)$ is the minimum index $i \in [1..n]$ such that the string is a suffix of
$T[1..n-\SA[i]]$, and $t(\ell)$ is defined analogously as the maximum such index.

\begin{claim}\label{claim-s-t-e-b}
We have $[s(\ell_{x,c}), t(\ell_{x,c})] \subseteq [\Begin(x), \End(x))$ and
$[s(\ell_{x',c'}), t(\ell_{x',c'})] \subseteq [\Begin(x'), \End(x'))$.
\end{claim}

\begin{claimproof}
It suffices to show that $s(\ell_{x,c}) \ge \Begin(x)$ and
$t(\ell_{x,c}) < \End(x)$.
By the definition of SA-intervals,
$s(\ell_{x,c}) = \max\{ i \in [1..\ell_{x,c}) \mid \LCP[i] < \Weight(\ell_{x,c}) \}$
and
$t(\ell_{x,c}) = \min\{ i \in [\ell_{x,c}..n] \mid \LCP[i] < \Weight(\ell_{x,c}) \} - 1$.

Assume for contradiction that $s(\ell_{x,c}) < \Begin(x)$.
Since $\Begin(x) \le \ell_{x,c}$, this implies
$\LCP[\Begin(x)] \ge \Weight(\ell_{x,c})$; otherwise,
$s(\ell_{x,c}) \ge \Begin(x)$ by definition.
By the definition of $\Begin(\cdot)$, $\LCP[\Begin(x)] < \LCP[x]$, yielding
$\LCP[x] > \Weight(\ell_{x,c})$, which contradicts
Observation~\ref{obs-weight-lcp}.

A symmetric argument shows that $t(\ell_{x,c}) < \End(x)$.
If $\End(x) \le t(\ell_{x,c})$, then since $\ell_{x,c} < \End(x)$ we obtain
$\LCP[\End(x)] \ge \Weight(\ell_{x,c})$, which together with
$\LCP[x] > \LCP[\End(x)]$ again contradicts Observation~\ref{obs-weight-lcp}.
This proves the claim.
\end{claimproof}

We now prove the lemma by contradiction.
Assume that
$T[n-\SA[\ell_{x,c}] + 1 - \Weight(\ell_{x,c})..\, n-\SA[\ell_{x,c}] + 1]$
is a suffix of
$T[n-\SA[\ell_{x',c'}] + 1 - \Weight(\ell_{x',c'})..\, n-\SA[\ell_{x',c'}] + 1]$,
which implies $c = c'$.
Consequently,
$T[n-\SA[\ell_{x,c}] + 1 - \Weight(\ell_{x,c})..\, n-\SA[\ell_{x,c}]]$
is a suffix of
$T[n-\SA[\ell_{x',c'}] + 1 - \Weight(\ell_{x',c'})..\, n-\SA[\ell_{x',c'}]]$,
and hence $\ell_{x',c'} \in [s(\ell_{x,c}), t(\ell_{x,c})]$.

If $\Weight(\ell_{x,c}) < \Weight(\ell_{x',c'})$, then by
Claim~\ref{claim-s-t-e-b} we have
$\ell_{x',c'} \in [\Begin(x), \End(x))$, which implies
$\Weight(\Cand_c(\Begin(x), \End(x))) > \Weight(\ell_{x,c})$, contradicting the definition of $\ell_{x,c}$.
If instead $\Weight(\ell_{x,c}) = \Weight(\ell_{x',c'})$, then the two substrings are equal, so $s(\ell_{x,c}) = s(\ell_{x',c'})$ and
$t(\ell_{x,c}) = t(\ell_{x',c'})$.
Assuming without loss of generality that $\ell_{x,c} < \ell_{x',c'}$, we obtain
$\Begin(x') \le s(\ell_{x',c'}) \le \ell_{x,c} < \ell_{x',c'} \le t(\ell_{x',c'}) <
\End(x')$, which contradicts the choice of $\ell_{x',c'}$ as
$\Cand_{c'}(\Begin(x'), \End(x'))$.
In both cases, we reach a contradiction, completing the proof.
\end{proof}

Finally, we show that $\FullCand$ has minimum possible size, applying the \emph{pigeonhole principle}.

\begin{lemma}\label{lem-min-size}
    The list $\FullCand$ is a minimum-size suffixient set. 
\end{lemma}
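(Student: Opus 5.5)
Lemma~\ref{lem-suffixient} already shows that $\FullCand$ is suffixient, so what remains is minimality: every suffixient set $\mathcal{X}$ satisfies $|\mathcal{X}| \ge |\FullCand|$. The plan is a pigeonhole argument. We attach to each element of $\FullCand$ a one-character right-maximal extension that it ``witnesses''. We then show that no single position of any suffixient set can witness two of these extensions. This gives an injection from $\FullCand$ into $\mathcal{X}$.

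\textbf{Step 1: the witnesses.} For each $y_{\text{text}}\in\FullCand$, choose a pair $(x,c)$ with $x$ a start-run boundary, $c\in\{\BWT[x],\BWT[x-1]\}$ and $n-\SA[\ell_{x,c}]+1=y_{\text{text}}$, where $\ell_{x,c}=\Cand_c(\Begin(x),\End(x))$. Distinct elements of $\FullCand$ then come from distinct indices $\ell_{x,c}$. Define
\[
E(y_{\text{text}}) := T[n-\SA[\ell_{x,c}]+1-\Weight(\ell_{x,c})..\,n-\SA[\ell_{x,c}]+1].
\]
By the definition of $\Weight(\cdot)$ (Observation~\ref{obs-weight-lcp} guarantees $\Weight(\ell_{x,c})\ge 0$), the string $T[n-\SA[\ell_{x,c}]+1-\Weight(\ell_{x,c})..\,n-\SA[\ell_{x,c}]]$ is right-maximal. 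Since $\BWT[\ell_{x,c}]=c$ is the next character of the text, $E(y_{\text{text}})$ is a one-character right-maximal extension of it.

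\textbf{Step 2: the injection.} Let $\mathcal{X}$ be any suffixient set. By Definition~\ref{def-suffixient-set}, for each $y_{\text{text}}\in\FullCand$ there is some $f(y_{\text{text}})\in\mathcal{X}$ such that $E(y_{\text{text}})$ is a suffix of $T[1..f(y_{\text{text}})]$. Suppose two distinct elements $y_{\text{text}}\neq y'_{\text{text}}$ had $f(y_{\text{text}})=f(y'_{\text{text}})$. Then $E(y_{\text{text}})$ and $E(y'_{\text{text}})$ are both suffixes of the same prefix $T[1..f(y_{\text{text}})]$. Two suffixes of one string are nested, so the shorter is a suffix of the longer. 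After relabelling so that $\Weight(\ell_{x,c})\le\Weight(\ell_{x',c'})$, the string $E(y_{\text{text}})$ is a suffix of $E(y'_{\text{text}})$. The corresponding indices satisfy $\ell_{x,c}\ne\ell_{x',c'}$, so this is exactly the situation ruled out by Lemma~\ref{lem-suffix-impossible}. Hence $f$ is injective and $|\mathcal{X}|\ge|\FullCand|$, so $\FullCand$ has minimum size.

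The main technical work is already contained in Lemma~\ref{lem-suffix-impossible}. The step that needs care is the bookkeeping in Step~1. Each element of $\FullCand$ must get one fixed representative pair $(x,c)$, and distinct text positions must correspond to distinct array indices $\ell$; this holds because $\SA$ is a permutation. The relabelling in Step~2 must also match the order of the hypotheses of Lemma~\ref{lem-suffix-impossible}, which requires the smaller weight on the candidate whose extension is the suffix. Finally, when the weights are equal, two nested suffixes of the same length are equal strings. Lemma~\ref{lem-suffix-impossible} rules out that equal case as well, so it needs no separate treatment.
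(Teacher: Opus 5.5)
Your proposal is correct and follows essentially the same route as the paper. Both arguments attach to each element of $\FullCand$ its extension of length $\Weight(\cdot)+1$, send it to a covering position of an arbitrary suffixient set, and use Lemma~\ref{lem-suffix-impossible} to rule out two extensions sharing one position; the paper phrases this as a pigeonhole contradiction rather than an injection, and your explicit relabelling by weight before invoking Lemma~\ref{lem-suffix-impossible} spells out a step the paper leaves implicit.
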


\begin{proof}
Let $F$ be an arbitrary suffixient set. Our goal is to show that $|F|\ge |\FullCand|$.

Suppose, for the sake of contradiction, that $|F|<|\FullCand|$.
Recall that each $x_{\text{text}}\in \FullCand$ corresponds to a right-maximal extension
$T[x_{\text{text}}-\Weight(\SA^{-1}[n-x_{\text{text}}+1])..\, x_{\text{text}}]$.
By the definition of suffixient sets, the string
$T[x_{\text{text}}-\Weight(\SA^{-1}[n-x_{\text{text}}+1])..\, x_{\text{text}}]$
must be a suffix of $T[1..f_{\text{text}}]$ for some $f_{\text{text}}\in F$.

Since $|F|<|\FullCand|$, the pigeonhole principle implies that there exist two distinct indices
$x_{\text{text}},y_{\text{text}}\in \FullCand$ such that both
$T[x_{\text{text}}-\Weight(\SA^{-1}[n-x_{\text{text}}+1])..\, x_{\text{text}}]$ and
$T[y_{\text{text}}-\Weight(\SA^{-1}[n-y_{\text{text}}+1])..\, y_{\text{text}}]$
are suffixes of the same prefix $T[1..f_{\text{text}}]$ for some $f_{\text{text}}\in F$.
Consequently, one of these two strings must be a suffix of the other.

However, by Lemma~\ref{lem-suffix-impossible}, no such pair $x_{\text{text}}$ and $y_{\text{text}}$ can exist.
This contradiction shows that our assumption was false, and therefore $|F|\ge |\FullCand|$.
\end{proof}

Lemmas~\ref{lem-min-size} and \ref{lem-suffixient} together imply that $\FullCand$ is a minimum-size suffixient set.

\section{Computing $sA$: A One-Pass Algorithm with One-Step Look-ahead}
\label{sect-computing-sa}

In this section, we present a new one-pass algorithm that computes suffixient arrays.
The algorithm iterates over the index \(i\) from \(1\) to \(n\) and makes decisions upon reading the triple $(\BWT[i], \SA[i], \LCP[i])$.
During the iteration at index \(i\), the algorithm may additionally access the input of the next iteration, namely the triple
$(\BWT[i+1], \SA[i+1], \LCP[i+1])$,
but it requires no information about inputs beyond \(i+1\).
For this reason, we refer to it as a \emph{one-pass algorithm with one-step look-ahead}.


An overview of this section is as follows.
Section~\ref{sect-ds} introduces the data structures and the three invariants maintained by the algorithm.
Section~\ref{sect-alg} then presents a detailed description of the algorithm together with the underlying intuitions.
Since the algorithm invokes the operation $\Cand$ at each iteration, it may initially appear to require quadratic time.
Before addressing the complexity analysis, we establish the correctness of the algorithm in Section~\ref{sect-proof}.
Next, Section~\ref{sect-adjust-alg} shows that a \emph{validity test} for $\Cand$, such as checking whether $i = \Cand(\cdot,\cdot)$, can be implemented using a constant number of amortized $\mathcal O(1)$-time operations, thereby justifying that the algorithm runs in a one-pass fashion.
Finally, Section~\ref{sect-complexity} presents the overall complexity analysis.

\subsection{The Data Structures}
\label{sect-ds}

During the execution of the algorithm, we maintain the following data structures:
\begin{itemize}
    \item A monotone stack used to compute $\Begin(i)$ during the $i$-th iteration of the algorithm;
    \item A doubly-linked list, $\RowList$, containing at most $\sigma$ triples
        drawn from $[1..n] \times \Sigma \times [0..n]$;
    \item An array $\Map[1..\sigma]$ such that, for each $c \in \Sigma$, the entry $\Map[c]$ stores a pointer to the (unique) triple $(\cdot, c, \cdot)$ in $\RowList$, if such a triple exists, and stores \texttt{null} otherwise;
    \item For every $c \in \Sigma$, a linked list $\Result_c$ containing positions drawn from $[1..n]$.
\end{itemize}
For convenience, we define $\Result$ as the list obtained by concatenating all 
$\Result_1, \dots, \Result_\sigma$ in this order.

For every triple $(p_\text{text}, \cdot, \cdot)$ stored in $\RowList$, we will refer to
$p_\text{text}$ as a \emph{candidate position}, and to the corresponding index $\SA^{-1}[n - p_{\text{text}} + 1]$ as a \emph{candidate index}.

The following invariants describe the content of the above data-structures
during the $i$-th iteration of the algorithm.
Among these, \textbf{Invariant~2} characterizes the candidate indices stored in $\RowList$.
\begin{itemize}
    \item \textbf{Invariant 1:} For each \(c \in \Sigma\), $\Result_c$ is the sub-list of $\FullCand$ containing exactly those positions \(j_{\text{text}}\) such that \(\SA^{-1}[n-j_{\text{text}}+1] < i\) and \(T[j_{\text{text}}] = c\).
    \item \textbf{Invariant~2:} The list $\RowList$ contains exactly those triples $(n - \SA[j] + 1,\, \BWT[j],\, \Weight(j))$ where the index $j$ satisfies either of the following conditions:
\begin{itemize}
    \item $\Weight(j) = \LCP[j]$, $j \le i-1 < \End(j)$, and
    $\Cand_{\BWT[j]}(\Begin(j), i) = j$; or
    \item $\Weight(j) > -1$, $\Weight(j) \neq \LCP[j]$, $j \le i-1 < \End(j+1)$, and
    $\Cand_{\BWT[j]}(\Begin(j+1), i) = j$.
\end{itemize}

    \item \textbf{Invariant~3:} The triples in $\RowList$ are sorted in non-increasing order by weight.
\end{itemize}


\subsection{The Algorithm}
\label{sect-alg}

In this section we give an high-level description of the algorithm. As already
mentioned above, the algorithm iterates over indexes $i=1, 2, \dots, n$ once.
In Step 0 we describe the special case of $i=1$. Then, Step 1 and Step 2 are
executed in every successive iteration $i = 2, 3 \dots, n$. Finally, a last step
will return the list $\FullCand$ using the information computed in the preceding
iterations.

For both Step 1 and 2 we give some intuitive explanation on why they are correct.
The full proof is given in Section~\ref{sect-proof}. The pseudocode of the algorithm is deferred to the Appendix~\ref{app-pseudo-sa}.


\subparagraph*{Step 0.}
In the first iteration, if $\BWT[2] \ne \BWT[1]$ we append the triple 
$(n - \SA[1] + 1, \BWT[1], \Weight(1))$ to the front of $\RowList$, and we set 
$\Map[\BWT[1]]$ to point to such triple; otherwise, no triple is inserted in
$\RowList$. Every list $\Result_c$ remains empty.

We now consider iteration $i$ for $i\in (1, n]$.
The operations are divided into two steps.

\subparagraph*{Step 1.}
Remove from $\RowList$ every triple $(p_{\text{text}}, c, w)$ such that
$w > \LCP[i]$.
Set $\Map[c]$ to \texttt{null} and append the position
$p_{\text{text}}$ to the tail of the list $\Result_{c}$.



\subparagraph*{Intuition behind Step~1.}
We provide some intuition for why the position $p_{\text{text}}$ appended to
$\Result_{char}$ always belongs to $\FullCand$, a property required to maintain
\textbf{Invariant~1}.

Let $j = \SA^{-1}[n - p_{\text{text}} + 1]$.
By Lemma~\ref{lem-LCP-candidate}, depending on whether $\LCP[j] = w$,
we have either $j = \Cand_{c}(\Begin(j), i)$ or
$j = \Cand_{c}(\Begin(j+1), i)$.
Moreover, Lemma~\ref{lem-end-j-j+1} shows that when $\LCP[i] < w$,
the corresponding interval endpoint satisfies
$\End(j) = i$ in the former case and $\End(j+1) = i$ in the latter.
Thus, intuitively, $j$ is always selected as a candidate index over one of the intervals $[\Begin(j), \End(j))$ or $[\Begin(j+1), \End(j+1))$.

Furthermore, Lemma~\ref{lem-result_c-to-L_c} guarantees that in the first case
$\BWT[j] \neq \BWT[j-1]$, while in the second case $\BWT[j+1] \neq \BWT[j]$.
In both situations, $j$ satisfies the defining conditions of $\FullCand$.
Therefore, the position $p_{\text{text}}$ appended to $\Result_{c}$ indeed
belongs to $\FullCand$.

Lemma~\ref{observe-2} states that the triples in $\RowList$ are sorted in non-increasing order of their weight values.
As a result, every triple satisfying $\LCP[i] < w$ can be
identified and removed in $\mathcal O(1)$ time.

\subparagraph*{Step~2.}
Compute the value $\Weight(i)$.
If $\Weight(i) = -1$, proceed immediately to the next iteration.
Otherwise, compute the index $\Begin(i)$ and initialize a variable
$\CurrB$  to $\Begin(i)$.
If $\Weight(i) \neq \LCP[i]$, update $\CurrB$ to $\Begin(i+1)$.

Assume that $c = \BWT[i]$.
We then check whether $i = \Cand_c(\CurrB, i+1)$.
If this condition does not hold, proceed immediately to the next iteration.
If $\Map[c] \neq \texttt{null}$, remove from $\RowList$ the triple pointed to by $\Map[c]$.
Then prepend the triple $(n-\SA[i]+1, c, \Weight(i))$ to $\RowList$ and update
$\Map[c]$ to point to the new head.

\subparagraph*{Intuition behind Step~2.}
We now give an informal explanation of why both \textbf{Invariant~2} and
\textbf{Invariant~3} are maintained after this step.
If either $\Weight(i) = -1$ or $i \neq \Cand_c(\CurrB, i+1)$, the algorithm
immediately proceeds to the next iteration.
In this situation, no triple is added to $\RowList$, and both invariants
are trivially maintained.
Otherwise, the triple
$(n - \SA[i] + 1, \BWT[i], \Weight(i))$ is added to the front of $\RowList$.
By Lemma~\ref{observe-2}, this insertion preserves the non-increasing order
of weights in $\RowList$, ensuring that \textbf{Invariant~3} holds.

It remains to argue that the newly added triple satisfies one of the
conditions in \textbf{Invariant~2}.
Two cases arise, depending on whether $\LCP[i] = \Weight(i)$.
If $\LCP[i] = \Weight(i)$, then $\CurrB = \Begin(i)$ and
$i = \Cand_c(\Begin(i), i+1)$; otherwise, $\LCP[i] \neq \Weight(i)$,
$\CurrB = \Begin(i+1)$, and $i = \Cand_c(\Begin(i+1), i+1)$.
Setting $j := i$, we have $\Weight(j) > -1$, and either
$j \le i+1-1 < \End(j)$ with $\Cand_c(\Begin(j), i+1) = j$, or
$j \le i+1-1 < \End(j+1)$ with $\Cand_c(\Begin(j+1), i+1) = j$.
In both cases, the conditions of \textbf{Invariant~2} are satisfied.
In Section~\ref{sect-adjust-alg}, we show how the test
$i = \Cand_c(\CurrB, i+1)$ can be implemented in amortized constant time.

\subparagraph*{Final step after the $n$ iterations.}  
If $\RowList$ is nonempty, append all positions $p_{\text{text}}$ from the remaining tuples $(p_{\text{text}}, c, w)$ to their respective $\Result_c$ lists.  
Then all the $\Result_c$ lists are concatenated to obtain $\Result$, which is then
returned as the suffixient array.


In the next section we prove that $\Result = \FullCand$.
Hence, by Lemma~\ref{lem-min-size}, the output list is suffixient and of minimum size.
Note that the prefix list
$\{\,T[1..\SA[1]], \dots, T[1..\SA[n]]\,\}$
is sorted in co-lexicographic order.
We further show that this ordering ensures that the indices in the output list are correctly sorted, in accordance with the definition of a suffixient array.

\subsection{Correctness of the Algorithm}
\label{sect-proof}

In this section, we prove the correctness of the algorithm.
The proof is divided into three parts.
In Section~\ref{sect-order} we prove that the triples in $\RowList$ are always sorted non-increasingly by their weight values.
Then, in Section~\ref{sect-proof-suff}, we show that for any character $c\in\Sigma$ and any position $\ell_{\text{text}} \in \Result_c$, it holds that $\ell_{\text{text}} \in \FullCand$.
Finally, in Section~\ref{sect-proof-ness}, we prove the converse: every $\ell_{\text{text}} \in \FullCand$ is added to $\Result_c$ for some $c\in\Sigma$; we also show that the order of indices in the final list $\Result$ is consistent with the order induced by the suffixient array.

\subsubsection{The Ordering on the Triples in $\RowList$}
\label{sect-order}

\begin{lemma}\label{lem-consistency}
Let $c = \BWT[i]$. If $i = \Cand_c(\CurrB, i+1)$ at iteration $i$ and $\Weight(i) \le \LCP[i]$, then one of the following holds: either $\BWT[j] = c$ for every $j \in [\CurrB, i)$, or $\BWT[j] \ne c$ for every $j \in [\CurrB, i)$.
\end{lemma}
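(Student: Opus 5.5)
The plan is a short contradiction argument. If both kinds of characters occur in $\BWT[\CurrB..i-1]$, they meet at an adjacent pair inside that range. The $c$-member of that pair is a run boundary whose weight is at least $\Weight(i)$, and it lies strictly to the left of $i$. That contradicts $i = \Cand_c(\CurrB, i+1)$, since $\Cand$ picks the \emph{smallest} index attaining the maximum weight.

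First I pin down what $\CurrB$ is under the hypothesis $\Weight(i) \le \LCP[i]$. Step~2 only reaches the $\Cand$ test when $\Weight(i) \ge 0$, so $i$ is a run boundary. There are two cases.
\begin{itemize}
\item \textbf{Case A: $\Weight(i) = \LCP[i]$.} Then $\CurrB = \Begin(i)$. By definition of $\Begin(\cdot)$, every $k \in (\Begin(i), i]$ satisfies $\LCP[k] \ge \LCP[i] = \Weight(i)$.
\item \textbf{Case B: $\Weight(i) \neq \LCP[i]$.} Then $\Weight(i) < \LCP[i]$. By the formula for $\Weight(\cdot)$, $i$ cannot be a start-run boundary, since otherwise $\Weight(i) \ge \LCP[i]$. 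So $i$ is only an end-run boundary, and $\Weight(i) = \LCP[i+1]$. Here $\CurrB = \Begin(i+1)$, and every $k \in (\Begin(i+1), i+1]$ satisfies $\LCP[k] \ge \LCP[i+1] = \Weight(i)$.
\end{itemize}
In both cases the conclusion is the same: every $k \in (\CurrB, i]$ satisfies $\LCP[k] \ge \Weight(i)$.

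Now suppose, for contradiction, that $[\CurrB, i)$ contains an index $j$ with $\BWT[j] = c$ and an index $j'$ with $\BWT[j'] \neq c$. Walking between $j$ and $j'$, I find consecutive indices $k-1, k \in [\CurrB, i)$ such that exactly one of $\BWT[k-1]$, $\BWT[k]$ equals $c$. Let $m \in \{k-1, k\}$ be the one with $\BWT[m] = c$.
\begin{itemize}
\item If $m = k$, then $m$ is a start-run boundary, so $\Weight(m) \ge \LCP[k]$.
\item If $m = k-1$, then $m$ is an end-run boundary, so $\Weight(m) \ge \LCP[m+1] = \LCP[k]$.
\end{itemize}
Since $k-1 \ge \CurrB$ and $k \le i-1$, we have $k \in (\CurrB, i]$, hence $\LCP[k] \ge \Weight(i)$. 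Therefore $\Weight(m) \ge \Weight(i)$ with $m < i$, $m \in [\CurrB, i+1)$, and $\BWT[m] = c$.

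This contradicts Definition~\ref{def-cand}. Either $m$ has strictly larger weight than $i$, or it ties with $i$ and is smaller, so in neither case can $i$ be $\Cand_c(\CurrB, i+1)$.

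The only step needing care is Case~B. There I must justify that $\Weight(i) < \LCP[i]$ forces $\Weight(i) = \LCP[i+1]$, so that $\Begin(i+1)$ gives the right LCP lower bound. That follows directly from the case formula for computing $\Weight(\cdot)$. I also need to check that the adjacent pair stays inside the range where that bound holds. This works because both $k-1$ and $k$ lie in $[\CurrB, i)$, so $k \le i-1$.
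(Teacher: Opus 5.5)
Your proof is correct and is essentially the paper's argument. Both find an adjacent pair $k-1,k$ inside $[\CurrB, i)$ where membership in $c$ flips. Both then bound the weight of the $c$-member below by $\LCP[k] \ge \LCP[\ell] = \Weight(i)$ via the definition of $\Begin(\ell)$ with $\ell \in \{i, i+1\}$, and contradict the smallest-index tie-breaking in $\Cand_c(\CurrB, i+1)$. Your explicit derivation that $\Weight(i) < \LCP[i]$ forces $i$ to be only an end-run boundary, so that $\Weight(i) = \LCP[i+1]$, spells out a step the paper covers only with ``by construction''.
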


\begin{proof}
Suppose, for the sake of contradiction, that there exists an index $j' \in [\CurrB, i-1)$ such that either $\BWT[j'] = c$ and $\BWT[j' + 1] \ne c$, or $\BWT[j'] \ne c$ and $\BWT[j' + 1] = c$. Let $j'' \in \{j', j' + 1\}$ be such that $\BWT[j''] = c$.

We claim that $\Weight(j'') \ge \LCP[j' + 1]$. Indeed, if $\BWT[j'] = c$ and $\BWT[j' + 1] \ne c$, then by the definition of $\Weight(\cdot)$ we have $\Weight(j'') = \Weight(j') \ge \LCP[j' + 1]$. Otherwise, if $\BWT[j'] \ne c$ and $\BWT[j' + 1] = c$, then $\Weight(j'') = \Weight(j' + 1) \ge \LCP[j' + 1]$.

Define $\ell := i$ if $\Weight(i) = \LCP[i]$, and otherwise $\ell := i + 1$. By construction, $\Begin(\ell) = \CurrB$ and $\Weight(i) = \LCP[\ell]$. Since $j' + 1 \in (\CurrB, i)$ and $(\CurrB, i) \subseteq (\Begin(\ell), \ell)$, the definition of $\Begin(\cdot)$ implies $\LCP[j' + 1] \ge \LCP[\ell]$. Consequently, $\Weight(j'') \ge \LCP[j' + 1] \ge \LCP[\ell] = \Weight(i)$.

Finally, since $\BWT[j''] = c$, $\CurrB = \Begin(\ell) \le j'' < i$, and $\Weight(j'') \ge \Weight(i)$, this contradicts $i = \Cand_c(\CurrB, i + 1)$. Hence, the assumption is false, and the lemma follows.
\end{proof}

Note that in Lemma~\ref{lem-consistency}, if $\Weight(i) > \LCP[i]$, then
$\Weight(i) = \LCP[i+1] > \LCP[i]$ and $\CurrB = \Begin(i+1) = i$; hence,
the interval $[\CurrB, i)$ is empty.

\begin{lemma}\label{observe-2}
The entries in $\RowList$ are sorted in non-increasing order by weight.
\end{lemma}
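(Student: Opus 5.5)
We will argue by induction on the iteration $i$ that, at the end of every iteration, the weights of the triples in $\RowList$, read from head to tail, are non-increasing. The base case is Step~0, after which $\RowList$ holds at most one triple. For the inductive step we go through the operations of iteration $i$ one by one.

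\textbf{Step~1.} This step only deletes triples. Deleting elements from a sorted doubly-linked list leaves it sorted, so the order is preserved. We also record what remains: after Step~1, every triple still in $\RowList$ has weight at most $\LCP[i]$. Step~2 starts by possibly removing the triple pointed to by $\Map[c]$, which is again a deletion and preserves the order. The only operation that can break sortedness is prepending the new triple $(n-\SA[i]+1, c, \Weight(i))$ to the head. It therefore suffices to show that $\Weight(i) \ge w$ for every triple $(\cdot, c', w)$ that is still in $\RowList$ at that moment.

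\textbf{Step~2, case $\Weight(i) \ge \LCP[i]$.} By the remark after Step~1, every remaining weight satisfies $w \le \LCP[i] \le \Weight(i)$, and we are done. By the definition of $\Weight(\cdot)$, $\Weight(i)$ equals $\LCP[i]$, $\LCP[i+1]$, or their maximum. Hence this case covers $\Weight(i)=\LCP[i]$ and the case $\Weight(i)=\LCP[i+1]>\LCP[i]$ (the situation noted after Lemma~\ref{lem-consistency}).

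\textbf{Step~2, case $\Weight(i) < \LCP[i]$.} This is the main obstacle. Here $i$ is an end-run boundary with $\Weight(i)=\LCP[i+1]<\LCP[i]$, so $i$ is not a start-run boundary, i.e.\ $\BWT[i-1]=\BWT[i]=c$. Moreover $\CurrB=\Begin(i+1)$, and since $\LCP[i]>\LCP[i+1]$ we get $\CurrB<i$. By Lemma~\ref{lem-consistency}, since $\BWT[i-1]=c$, all of $\BWT[\CurrB..i)$ equal $c$.

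Suppose for contradiction that some remaining triple has weight $w>\Weight(i)$. It comes from an index $j\le i-1$, and by Invariant~2 it satisfies $j<i<\End(j)$ (or $j<i<\End(j+1)$) with the associated $\LCP$ value equal to $w$. Hence $\LCP[k]\ge w>\LCP[i+1]$ for all $k\in(j,i]$ (respectively $(j+1,i]$). This places $j$ inside $[\Begin(i+1), i)=[\CurrB,i)$: indeed $\Begin(i+1)$ is the last index before $i+1$ with $\LCP<\LCP[i+1]$, and all $\LCP$ values in $(j,i]$ are larger, so $\Begin(i+1)\le j$. Consequently $\BWT[j]=c$. But then $\Map[c]$ pointed to exactly this triple, since $\Map$ keeps at most one triple per character, and it was removed before the prepend. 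This is a contradiction, so all remaining weights are at most $\Weight(i)$. The delicate point is handling the two variants of Invariant~2 and the boundary index $j=\Begin(i+1)$ itself, where one must check carefully that $j\ge\CurrB$.
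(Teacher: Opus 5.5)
Your proof is correct. Its skeleton matches the paper's:
\begin{itemize}
\item induction over iterations, with deletions preserving the order;
\item the easy case $\Weight(i)\ge\LCP[i]$, handled via the bound $w\le\LCP[i]$ left by Step~1;
\item in the hard case, Lemma~\ref{lem-consistency} together with $\BWT[i-1]=c$, giving $\BWT[k]=c$ for all $k\in[\CurrB,i)$.
\end{itemize}

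You diverge in the last step. The paper notes that every index in $(\CurrB,i)$ then has weight $-1$. So nothing was inserted after iteration $\CurrB$ except possibly $\CurrB$ itself, with weight $\LCP[\CurrB]$. Combined with Step~1 at iteration $\CurrB$, every surviving triple, in particular the head, has weight at most $\LCP[\CurrB]<\LCP[i+1]=\Weight(i)$. This gives a strict inequality and does not even use the deletion via $\Map[c]$.

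You instead argue by contradiction:
\begin{itemize}
\item a triple of weight $w>\LCP[i+1]$ must come from an index $j\in[\CurrB,i)$, by the survival condition $\LCP[k]\ge w$ on $(j,i]$;
\item hence its character is $c$;
\item hence it is eliminated by the one-triple-per-character property of $\Map$ together with the fact that Step~2 deletes before it prepends.
\end{itemize}
This route is valid. It replaces the paper's \emph{no insertions in $(\CurrB,i)$} observation with the $\Map$ uniqueness invariant and the order of operations inside Step~2.

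Two points to tighten.

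First, do not cite Invariant~2. The paper never proves it independently, and its maintenance is argued using this very lemma, so the citation looks circular. All you need is this: a triple inserted at iteration $j$ with weight $w=\Weight(j)\in\{\LCP[j],\LCP[j+1]\}$ survived Step~1 at iterations $j+1,\dots,i$, hence $\LCP[k]\ge w$ for all $k\in(j,i]$. This follows directly from Step~1. The induction hypothesis (sortedness at earlier iterations) guarantees that head-only removal really deletes every heavy triple.

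Second, the boundary case you flagged as delicate needs no separate treatment. Use the survival condition on all of $(j,i]$, including $k=j+1$; in the second variant $\LCP[j+1]=w$ anyway. Then every index in $(j,i]$ has $\LCP$ value greater than $\LCP[i+1]$, so $\Begin(i+1)\le j$ in both variants.
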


\begin{proof}
We prove the claim by induction on the iteration index.

In the first iteration, if $\BWT[1] = \BWT[2]$, no triple is added to $\RowList$, so it remains empty. Otherwise, $\RowList$ contains exactly one triple. In both cases, the claim holds trivially.

Assume as induction hypothesis that after each of the first $(i-1)$ iterations (for some $i > 1$), the triples in $\RowList$ are sorted in non-increasing order of their $\Weight$ values.

Now consider the $i$-th iteration.  
Any triple removed from $\RowList$ in Step~1 is always removed from the front and has weight strictly greater than $\LCP[i]$. Therefore, after Step~1, the relative order of the remaining triples is preserved. Let $(p_{\text{text}}, c, w)$ denote the triple currently at the front of $\RowList$ (if any); then we have $\LCP[i] \ge w$.

If no triple is added to $\RowList$ in Step~2, the claim follows immediately from the induction hypothesis. Otherwise, we must show that the newly added triple $(n-\SA[i] + 1, \BWT[i], \Weight(i))$ satisfies $\Weight(i) \ge w$.
Since $\LCP[i] \ge w$, if $\Weight(i) \ge \LCP[i]$, then $\Weight(i) \ge w$ holds immediately. 

It remains to consider the case $\Weight(i) < \LCP[i]$.
Since the triple is added at iteration $i$, we have $i = \Cand_c(\CurrB, i+1)$ for $c = \BWT[i]$. Because both $i = \Cand_c(\CurrB, i+1)$ and $\Weight(i) < \LCP[i]$, Lemma~\ref{lem-consistency} applies, so either $\BWT[j] = c$ for every $j \in [\CurrB, i)$ or $\BWT[j] \ne c$ for every such $j$.
Since $i > 1$ and $\Weight(i) < \LCP[i]$, by definition of $\Weight(\cdot)$ we must have $\BWT[i-1] = \BWT[i]$, which implies $\BWT[j] = c$ for every $j \in [\CurrB, i)$. Consequently, $\Weight(j) = -1$ for every $j \in (\CurrB, i)$, and either $\Weight(\CurrB) = \LCP[\CurrB]$ or $\Weight(\CurrB) = -1$.
The former case implies $\SA^{-1}[n-p_{\text{text}}+1] \le \CurrB$. Moreover, since $\SA^{-1}[n-p_{\text{text}}+1] \le \CurrB$ and $\Weight(\CurrB) \le \LCP[\CurrB]$, it follows that $w \le \LCP[\CurrB]$.
On the other hand, since $\Weight(i) < \LCP[i]$, we have $\Weight(i) = \LCP[i+1]$ and $\CurrB = \Begin(i+1)$. Because $\LCP[i+1] > \LCP[\CurrB]$ and $w \le \LCP[\CurrB]$, it follows that $\Weight(i) > w$.

Therefore, whether $\Weight(i) \ge \LCP[i]$ or not, we have $\Weight(i) \ge w$ (in fact, strictly greater in the latter case), so inserting the new triple preserves the non-increasing order in $\RowList$. This completes the induction and the proof.
\end{proof}

\subsubsection{Necessity: $\Result \subseteq \FullCand$}
\label{sect-proof-suff}

Throughout this section, let $\RowList_i$ denote the state of $\RowList$ at the start of the $i$-th iteration of the algorithm. Let $\POS_i = \{\SA^{-1}[n - p_{\text{text}} + 1] \mid (p_\text{text}, \cdot, \cdot) \in \RowList_i\}$. For each $j \in \POS_i$, define $x_j \in \{\,j, j+1\,\}$ by $x_j = j$ if $\Weight(j) = \LCP[j]$, and $x_j = j+1$ otherwise.
Observation~\ref{obser-x_i-curr_b} details the properties of $x_j$.

\begin{observation}\label{obser-x_i-curr_b}
For each $j \in \POS_i$, the following hold: (a) either $\BWT[j] \neq \BWT[j-1]$ or $\BWT[j] \neq \BWT[j+1]$; (b) $\Weight(j) = \LCP[x_j]$; and (c) $\Begin(x_j) = \CurrB$ at iteration $j$. 
\end{observation}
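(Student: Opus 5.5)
The plan is to fix $j \in \POS_i$ and trace it back to the single place where triples enter $\RowList$. Only Step~0 (for $j=1$) and Step~2 of iteration $j$ ever prepend a triple whose candidate index is $j$. Step~1 and the replacement in Step~2 only delete triples. So membership $j\in\POS_i$ certifies that at iteration $j$ the algorithm passed the tests $\Weight(j)\neq -1$ and $j=\Cand_{\BWT[j]}(\CurrB,j+1)$, where $\CurrB$ is the value set in Step~2 of iteration $j$. Equivalently, I could read this off directly from \textbf{Invariant~2}, which is stated for the contents of $\RowList$. I will argue (a), (b), (c) from this characterization, treating $j=1$ separately: there Step~0 inserts only when $\BWT[2]\neq\BWT[1]$, and $\Weight(1)=\LCP[2]$, which gives $x_1=2$ (recall $\LCP[1]=-1$, so the weight cannot equal $\LCP[1]$ unless it is $-1$, which it is not).

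For (a): since $\Weight(j)>-1$, the definition of $\Weight(\cdot)$ says $j$ is a run boundary. That is, either $j>1$ and $\BWT[j]\neq\BWT[j-1]$, or $j<n$ and $\BWT[j]\neq\BWT[j+1]$, which is exactly (a).

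For (b), I split on whether $\Weight(j)=\LCP[j]$. If it is, then $x_j=j$ and (b) is immediate. Otherwise, by the case analysis computing $\Weight(\cdot)$, the weight is one of $\LCP[j]$, $\LCP[j+1]$, or $\max\{\LCP[j],\LCP[j+1]\}$. In every case where it differs from $\LCP[j]$, it must equal $\LCP[j+1]=\LCP[x_j]$.

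For (c): Step~2 at iteration $j$ sets $\CurrB:=\Begin(j)$ when $\Weight(j)=\LCP[j]$, and $\CurrB:=\Begin(j+1)$ otherwise. That is precisely $\Begin(x_j)$. The $j=1$ case is consistent, since Step~0 implicitly plays the role of $\CurrB=\Begin(2)=1$; I would remark this convention.

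The only delicate point is the bookkeeping that $j\in\POS_i$ really implies insertion at iteration $j$ with these tests passed. This includes noting that a triple with the same character replaced via $\Map$ is removed, and so is not in $\POS_i$. Everything else is definitional.
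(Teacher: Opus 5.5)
Your proposal is correct and follows essentially the same route as the paper: insertion into $\RowList$ certifies $\Weight(j) \ge 0$, from which (a) and (b) are read off the case-based formula for $\Weight(\cdot)$, and (c) is simply how Step~2 sets $\CurrB$ to $\Begin(j)$ or $\Begin(j+1)$. Your explicit treatment of $j=1$ makes precise a case the paper's proof leaves implicit. There, Step~0 inserts only when $\BWT[1]\neq\BWT[2]$, so $\Weight(1)=\LCP[2]\neq\LCP[1]=-1$, giving $x_1=2$ and $\Begin(2)=1$.
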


\begin{proof}
Since $j \in \POS_i$, the algorithm guarantees $\Weight(j) \ge 0$. By definition of $\Weight(\cdot)$, this implies $\Weight(j) \in \{\,\LCP[j], \LCP[j+1]\,\}$ and that either $\BWT[j] \neq \BWT[j-1]$ or $\BWT[j] \neq \BWT[j+1]$, proving (a), and $\Weight(j) = \LCP[x_j]$, proving (b).

If $\Weight(j) \neq \LCP[j]$, then $x_j = j+1$ and $\CurrB = \Begin(j+1)$, so $\CurrB = \Begin(x_j)$ at iteration $j$. Otherwise, $x_j = j$ and $\CurrB = \Begin(j)$, so again $\CurrB = \Begin(x_j)$. This proves (c).
\end{proof}

%

\begin{lemma}\label{lem-LCP-candidate}
    For every $j \in \POS_i$, if $\Weight(j)\ne \LCP[j]$, then $\Cand_c(\Begin(j+1), i)=j$, where $c=\BWT[j]$; otherwise, $\Cand_c(\Begin(j), i)=j$.
\end{lemma}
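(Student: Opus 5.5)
We argue by induction on $i$, tracking for each $j\in\POS_i$ the moment $j$ entered $\RowList$ and showing that no later index can have displaced it as the $\Cand$ over the growing interval. By Observation~\ref{obser-x_i-curr_b}(c), when $j$ was inserted at iteration $j$, Step~2 verified $j=\Cand_c(\Begin(x_j), j+1)$ with $c=\BWT[j]$. It therefore suffices to show that for every $i'$ with $j<i'<i$, the index $i'$ does not beat $j$. Concretely, we must rule out any $i'\in(j,i)$ with $\BWT[i']=c$ and $\Weight(i')>\Weight(j)$. Ties are harmless, because $\Cand$ takes the smallest maximizing index and $j<i'$.

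Fix such a hypothetical $i'$, and take it to be the first one. If $\Weight(i')>\Weight(j)$, then at iteration $i'$ the test $i'=\Cand_c(\CurrB_{i'},i'+1)$ is performed, where $\CurrB_{i'}=\Begin(x_{i'})$. We claim $\Begin(x_{i'})\le j$, so that $j$ lies in the window tested at $i'$. Two facts give this. Since $j$ is still in $\RowList$ at iteration $i$, Step~1 never ejected it, so $\LCP[k]\ge \Weight(j)=\LCP[x_j]$ for all $k\in(j,i)$. Also, $\LCP[x_{i'}]=\Weight(i')>\Weight(j)$. If $\Begin(x_{i'})>j$, we would need some $k$ in this range with $\LCP[k]<\LCP[x_{i'}]$, and the set of indices below $x_{i'}$ with smaller $\LCP$ includes everything with $\LCP$ equal to $\Weight(j)$. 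So I instead argue directly: $\Begin(x_{i'})$ is the largest $k<x_{i'}$ with $\LCP[k]<\LCP[x_{i'}]$, and this may well exceed $j$. Hence $j$ need not lie in the window, and that route is the wrong one.

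The cleaner route uses the $\Map$ mechanism. Suppose some $i'\in(j,i)$ with $\BWT[i']=c$ passes the test at iteration $i'$. Step~2 then removes the triple pointed to by $\Map[c]$, which is $j$'s triple if $j$ is still present. That contradicts $j\in\POS_i$. So every $i'\in(j,i)$ with $\BWT[i']=c$ and $\Weight(i')\ge 0$ fails the test: there is some $k\in[\Begin(x_{i'}),i')$ with $\BWT[k]=c$ and $\Weight(k)\ge\Weight(i')$.

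Now suppose for contradiction that $\Cand_c(\Begin(x_j),i)\ne j$. Choose $i'\in(j,i)$ with $\BWT[i']=c$ and $\Weight(i')>\Weight(j)$, taking the one of maximum weight and, among those, the smallest index. Then $i'=\Cand_c(\Begin(x_j),i)$. Because $i'$ failed its test, there is a $k\in[\Begin(x_{i'}),i')$ with $\BWT[k]=c$ and $\Weight(k)\ge\Weight(i')>\Weight(j)$. By the choice of $i'$, any such $k$ must satisfy $k<\Begin(x_j)$. This forces $\Begin(x_{i'})<\Begin(x_j)\le j$, so $\LCP[x_{i'}]\le\LCP[\Begin(x_j)]<\LCP[x_j]=\Weight(j)$, since all of $(\Begin(x_j),x_j]$ and $(j,i)$ have $\LCP\ge\Weight(j)$. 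But $\LCP[x_{i'}]=\Weight(i')>\Weight(j)$, which is a contradiction.

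The main obstacle is the boundary bookkeeping. The case $x_{i'}=i'+1=i$ involves an $\LCP$ index at iteration $i$ not yet covered by the non-ejection argument. I would also need to check that $\LCP$ values on $(\Begin(x_j),x_j)$ are at least $\LCP[x_j]$ by the definition of $\Begin$, and that $\Weight(j)=\LCP[x_j]$ by Observation~\ref{obser-x_i-curr_b}(b). Finally, the base case, Step~0 with $j=1$, is immediate.
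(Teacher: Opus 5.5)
Your final argument (third and fourth paragraphs) is correct and has the same skeleton as the paper's proof. First, $j=\Cand_c(\Begin(x_j),j+1)$ holds at insertion time. Second, one shows that no $i'\in(j,i)$ with $\BWT[i']=c$ has $\Weight(i')>\Weight(j)$.

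The paper states that second fact without justification, whereas you prove it. A heavier $c$-index that passes its test would evict $j$'s triple via $\Map[c]$, and that triple can never be re-inserted. A heavier $c$-index that fails its test is dominated by some $k\in[\Begin(x_{i'}),i')$. Moreover $\Begin(x_{i'})\ge\Begin(x_j)$, because $\Begin(x_j)<x_{i'}$ and $\LCP[\Begin(x_j)]<\Weight(j)<\LCP[x_{i'}]$. So $k$ contradicts $i'=\Cand_c(\Begin(x_j),i)$.

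Three small corrections to your write-up:
\begin{itemize}
\item The inequality $\LCP[x_{i'}]\le\LCP[\Begin(x_j)]$ follows from exactly this definition of $\Begin(x_{i'})$. It does not follow from the $\LCP$ bounds on $(\Begin(x_j),x_j]$ and $(j,i)$ that you cite.
\item The non-ejection bound is never used, so the boundary case $x_{i'}=i$ you flag is harmless.
\item The abandoned route of your second paragraph, and the vestigial ``induction'' framing, should simply be removed.
\end{itemize}
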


\begin{proof}
    Consider the $j$-th iteration of the algorithm.
    Since $j\in \POS_i$ and $j<i$, the tuple $(n-\SA[j]+1, \BWT[j], \Weight(j))$ is added to $\RowList$ in the $j$-th iteration.
    As shown in the algorithm, we know that at iteration $j$, the condition $j=\Cand_c(\CurrB, j+1)$ holds; moreover, by Observation~\ref{obser-x_i-curr_b}, we have $\CurrB=\Begin(x_j)$, so, $j=\Cand_c(\Begin(x_j), j+1)$.

    Since $j\in \POS_i$, we know that for any index $j<j'<i$, $\LCP[j']$ cannot be smaller than $\Weight(j)$ (otherwise, the tuple with $weight=\Weight(j)$ would be removed from $\RowList$ before iteration $i$); moreover, if $\BWT[j']=c$, then $\Weight(j')\le \Weight(j)$.
    Therefore, we have $j=\Cand_c(\Begin(x_j), i)$.
    If $\Weight(j)\ne \LCP[j]$, then $x_j=j+1$, so $\Cand_c(\Begin(j+1), i)=j$; otherwise, $x_j=j$, and $\Cand_c(\Begin(j), i)=j$, completing the proof.
\end{proof}

\begin{lemma}\label{lem-end-j-j+1} 
    For every $j \in \POS_i$ with $\LCP[i]<\Weight(j)$, if $\Weight(j)\ne \LCP[j]$, then $\End(j+1)=i$; otherwise, $\End(j)=i$. 
\end{lemma}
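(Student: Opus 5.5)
The plan is to use the characterization of $\RowList_i$ that follows from the algorithm's removal rule. Fix $j \in \POS_i$ with $\LCP[i] < \Weight(j)$, and let $x_j$ be as in Observation~\ref{obser-x_i-curr_b}, so that $\Weight(j) = \LCP[x_j]$. The claim is then exactly $\End(x_j) = i$. Since $x_j \in \{j, j+1\}$, this covers both cases of the statement at once: $x_j = j$ when $\Weight(j) = \LCP[j]$, and $x_j = j+1$ otherwise.

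First, I would show that $\LCP[i] < \LCP[x_j]$ and $x_j \le i$, so that $i$ is an index beyond $x_j$ with smaller $\LCP$. The inequality is immediate from $\LCP[i] < \Weight(j) = \LCP[x_j]$. For the position claim, note that $j < i$ because the triple for $j$ was inserted at iteration $j$. If $x_j = j+1$ and $j+1 = i$, we would need $\LCP[i] < \LCP[i]$, which is impossible. Hence $x_j < i$, and therefore $\End(x_j) \le i$.

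Second, I would show that no $j' \in (x_j, i)$ has $\LCP[j'] < \LCP[x_j]$. The key fact is that $j \in \POS_i$ means the triple inserted at iteration $j$ survived Step~1 of every iteration $j' \in (j, i)$. Step~1 of iteration $j'$ removes every triple with $w > \LCP[j']$, so survival gives $\LCP[j'] \ge \Weight(j) = \LCP[x_j]$ for all $j' \in (j, i)$. This range contains $(x_j, i)$. Combined with the first step, the smallest index after $x_j$ with $\LCP$ below $\LCP[x_j]$ is $i$, so $\End(x_j) = i$.

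The main obstacle is a subtlety in the survival argument. I must make sure that the triple is never removed and then re-inserted in Step~2 via $\Map[c]$ replacement between iterations $j$ and $i$; if it were, membership in $\RowList_i$ would no longer imply survival through every intermediate Step~1. This is harmless, because a replaced triple belongs to a different index and each index $j$ is inserted only at iteration $j$. Consequently, $j \in \POS_i$ means the triple was inserted at iteration $j$ and never removed, and in particular it passed the Step~1 threshold of every iteration $j' \in (j, i)$. The same reasoning was already used in the proof of Lemma~\ref{lem-LCP-candidate}, and I would cite it there.
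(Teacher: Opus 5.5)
Your proof is correct and follows the paper's argument: both cases reduce to $\End(x_j)=i$ via $\Weight(j)=\LCP[x_j]$, and the triple's survival through Step~1 of every iteration in $(j,i)$ gives $\LCP[j']\ge \Weight(j)$ on that range. You are also slightly more careful than the paper, since you explicitly rule out $x_j=i$ and explain why a triple in $\RowList_i$ cannot have been removed and later re-inserted.
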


\begin{proof}
Since $j\in \POS_i$, we know that $\LCP[j']\ge \Weight(j)$ for any index $j<j'<i$ (otherwise, $j\notin \POS_i$, a contradiction).
Recall that $\Weight(j)=\LCP[x_j]$, where $x_j\in \{j, j+1\}$, by Observation~\ref{obser-x_i-curr_b}.
Since $\Weight(j)>\LCP[i]$ and $\Weight(j)=\LCP[x_j]$, we have $\LCP[x_j]>\LCP[i]$, so $i>x_j$ is the smallest index with $\LCP[i]<\LCP[x_j]$, so, we have $\End(x_j)=i$.
If $\Weight(j)\ne \LCP[j]$, then $\Weight(j)=\LCP[j+1]$ and $x_j=j+1$, so $\End(j+1)=i$;
otherwise, $\Weight(j)=\LCP[j]$ and $x_j=j$, so $\End(j)=i$.
\end{proof}

Lemma~\ref{lem-result_c-to-L_c} and Definition~\ref{def-ell-x-c} imply $\ell_{\text{text}} \in \FullCand$ for any $\ell_{\text{text}} \in \Result_c$ and any $c \in [1, \sigma]$. 

\begin{lemma}\label{lem-result_c-to-L_c}
Let $(p_{\text{text}}, c, weight)$ be any triple removed from $\RowList_i$ in the first step of the $i$-th iteration of the algorithm, and let $j = \SA^{-1}[n - p_{\text{text}} + 1]$.
Either $\Cand_c(\Begin(j+1), \End(j+1)) = j$ and $\BWT[j+1]\ne \BWT[j]$, or $\Cand_c(\Begin(j), \End(j)) = j$ and $\BWT[j]\ne \BWT[j-1]$.
\end{lemma}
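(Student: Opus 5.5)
The plan is to combine Lemma~\ref{lem-LCP-candidate} with Lemma~\ref{lem-end-j-j+1}, and then prove the run-boundary condition separately using the definition of $\Weight(\cdot)$. The triple is removed in Step~1 of iteration $i$, so $j \in \POS_i$ and $weight = \Weight(j) > \LCP[i]$; by construction $c = \BWT[j]$. Let $x_j$ be as in Observation~\ref{obser-x_i-curr_b}, so that $\Weight(j) = \LCP[x_j]$.

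\emph{First case: $\Weight(j) \ne \LCP[j]$.} Then $x_j = j+1$. Lemma~\ref{lem-LCP-candidate} gives $\Cand_c(\Begin(j+1), i) = j$, and Lemma~\ref{lem-end-j-j+1} gives $\End(j+1) = i$. Substituting, we get $\Cand_c(\Begin(j+1), \End(j+1)) = j$.

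For the run condition I argue from the definition of $\Weight(\cdot)$. Since $\Weight(j) \ge 0$, the index $j$ is a run boundary and $\Weight(j) \in \{\LCP[j], \LCP[j+1]\}$, with the value $\LCP[j+1]$ only allowed when $j$ is an end-run boundary. Because $\Weight(j) \ne \LCP[j]$, the value must be $\LCP[j+1]$, so $j$ is an end-run boundary, i.e.\ $\BWT[j+1] \ne \BWT[j]$.

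\emph{Second case: $\Weight(j) = \LCP[j]$.} Lemmas~\ref{lem-LCP-candidate} and~\ref{lem-end-j-j+1} give $\Cand_c(\Begin(j), \End(j)) = j$ exactly as before. The difficulty is that $\Weight(j) = \LCP[j]$ may hold even when $j$ is only an end-run boundary with $\LCP[j] \ge \LCP[j+1]$, so $\BWT[j] \ne \BWT[j-1]$ need not follow. The step I expect to be the main obstacle is this subcase, $\BWT[j-1] = \BWT[j]$, where we must instead establish the first alternative.

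In that subcase $j$ is not a start-run boundary, so it must be an end-run boundary: $\BWT[j+1] \ne \BWT[j]$ and $\Weight(j) = \LCP[j+1] = \LCP[j]$. From equal LCP values I want $\Begin(j+1) = \Begin(j)$ and $\End(j+1) = \End(j)$.
\begin{itemize}
\item For $\Begin$: the index $j$ has $\LCP[j] = \LCP[j+1]$, so it is not a witness for $\Begin(j+1)$. Every earlier index is a witness for $\Begin(j+1)$ exactly when it is a witness for $\Begin(j)$, since the thresholds coincide. Hence $\Begin(j+1) = \Begin(j)$.
\item For $\End$: any index beyond $j+1$ with LCP below $\LCP[j]$ is also below $\LCP[j+1]$, and conversely. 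The index $j+1$ itself is not below $\LCP[j]$. Hence $\End(j+1) = \End(j)$.
\end{itemize}
The two intervals therefore coincide, so $\Cand_c(\Begin(j+1), \End(j+1)) = j$ together with $\BWT[j+1] \ne \BWT[j]$, which is the first alternative.

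If instead $\BWT[j] \ne \BWT[j-1]$, the second alternative holds directly from the candidate identity already obtained. This covers both cases and proves the lemma.
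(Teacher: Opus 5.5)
Your proposal is correct and follows the paper's proof essentially step for step. It uses the same case split on whether $\Weight(j)=\LCP[j]$, the same appeal to Lemmas~\ref{lem-LCP-candidate} and~\ref{lem-end-j-j+1}, and handles the subcase $\BWT[j-1]=\BWT[j]$ the same way, using $\LCP[j+1]=\LCP[j]$ to get $\Begin(j+1)=\Begin(j)$ and $\End(j+1)=\End(j)$; the only difference is that you give the witness argument for these two interval equalities explicitly, which the paper simply asserts from the definitions.
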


\begin{proof}
Since $p_{\text{text}}$ is removed from $\RowList_i$, we have $\Weight(j) \ge 0$.
Thus, either $\Weight(j) = \LCP[j]$ or $\Weight(j) = \LCP[j+1] \neq \LCP[j]$.

If $\Weight(j) \neq \LCP[j]$, then this implies $\BWT[j+1] \neq \BWT[j]$, by the definition of $\Weight(\cdot)$.
By Lemmas~\ref{lem-LCP-candidate} and~\ref{lem-end-j-j+1}, we obtain $\Cand_c(\Begin(j+1), \End(j+1)) = j$; the statement holds. 

Otherwise, $\Weight(j) = \LCP[j]$, this implies that either $\BWT[j] \neq \BWT[j-1]$, or $\BWT[j-1] = \BWT[j] \neq \BWT[j+1]$ and $\LCP[j+1] = \LCP[j]$.
In both cases, Lemmas~\ref{lem-LCP-candidate} and~\ref{lem-end-j-j+1} imply $\Cand_c(\Begin(j), \End(j)) = j$.
Since $\Cand_c(\Begin(j), \End(j)) = j$, if $\BWT[j] \neq \BWT[j-1]$, then the statement holds immediately.
Otherwise, since $\LCP[j+1] = \LCP[j]$, the definitions of $\Begin(\cdot)$ and $\End(\cdot)$ imply $\Begin(j+1) = \Begin(j)$ and $\End(j+1) = \End(j)$, and hence $\Cand_c(\Begin(j+1), \End(j+1)) = j$.
So, we have $\BWT[j] \neq \BWT[j+1]$ and $\Cand_c(\Begin(j+1), \End(j+1)) = j$, completing the proof.
\end{proof}


\subsubsection{Sufficiency: $\FullCand \subseteq \Result$}
\label{sect-proof-ness}


Throughout this section, let $x$ be any index with $1 < x \le n$ such that
$\BWT[x] \neq \BWT[x-1]$.
Let $c \in \{\,\BWT[x], \BWT[x-1]\,\}$, and define
$\ell_{x,c} := \Cand_c(\Begin(x), \End(x))$, so $\ell_{x, c}\in [\Begin(x), \End(x))$.

By Definition \ref{def-ell-x-c}, it suffices to prove that $n - \SA[\ell_{x,c}] + 1 \in \Result_c$.
To this end, we first prove in Lemma~\ref{lem-add-to-rowlist} that the triple $(n-\SA[\ell_{x, c}]+1, c, \Weight(\ell_{x, c}))$ is added to $\RowList$ in the $\ell_{x,c}$-th iteration of the algorithm.
We then show in Lemma~\ref{lem-remove-from-rowlist} that the tuple $(n-\SA[\ell_{x, c}]+1, c, \Weight(\ell_{x, c}))$ is removed from $\RowList$ in the $\End(\ell_{x,c})$-th iteration and subsequently added to $\Result_c$, which completes the proof.
Finally, we present in Lemma \ref{lem-order} that the ordering of the indices in the output list is consistent with the suffixient array.

\begin{lemma}\label{lem-add-to-rowlist}
The tuple $(n-\SA[\ell_{x,c}]+1, c, \Weight(\ell_{x,c}))$ is added to the front of
$\RowList$ during the $\ell_{x,c}$-th iteration of the algorithm.
\end{lemma}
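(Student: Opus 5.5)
Write $\ell := \ell_{x,c}$. By the description of Step~2, the triple is prepended at iteration $\ell$ exactly when $\Weight(\ell) \neq -1$ and $\ell = \Cand_c(\CurrB, \ell+1)$, where $\CurrB = \Begin(x_\ell)$, with $x_\ell = \ell$ if $\Weight(\ell) = \LCP[\ell]$ and $x_\ell = \ell+1$ otherwise. So the plan has three parts: show $\Weight(\ell) \ge 0$, show $\BWT[\ell] = c$, and prove the $\Cand$ condition. The first two come directly from Definition~\ref{def-cand}, because $\Cand_c(\Begin(x),\End(x)) \neq -1$. The substance is the $\Cand$ condition. That condition says $\ell$ is the leftmost index with maximum weight among indices $y \in [\Begin(x_\ell), \ell]$ with $\BWT[y] = c$ and $\Weight(y) \ge 0$. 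Since $\ell \ge \Begin(x)$ is leftmost-maximal on $[\Begin(x),\End(x))$, it suffices to prove the containment $[\Begin(x_\ell), \ell] \subseteq [\Begin(x), \End(x))$, that is, $\Begin(x_\ell) \ge \Begin(x)$.

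To prove this containment I will use Observation~\ref{obs-weight-lcp}, which gives $\LCP[x_\ell] = \Weight(\ell) \ge \LCP[x]$. I also need $x_\ell$ to lie in the range where $\LCP$ stays at least $\LCP[x]$, namely $x_\ell \in (\Begin(x), \End(x))$. There are two cases. If $x_\ell = \ell$, then $\ell \in [\Begin(x),\End(x))$. The sub-case $\ell = \Begin(x)$ is impossible, because $\LCP[\Begin(x)] < \LCP[x] \le \Weight(\ell) = \LCP[\ell]$. If instead $x_\ell = \ell+1$, then $\LCP[\ell+1] = \Weight(\ell) \ge \LCP[x] > \LCP[\End(x)]$, so $\ell+1 \neq \End(x)$ and hence $\ell+1 < \End(x)$. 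When $\ell = \Begin(x)$ in this case, we get $x_\ell = \Begin(x)+1$; then $\Begin(x_\ell)$ is either $\Begin(x)$ or something further right, and the argument still goes through.

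With $x_\ell$ in that range and $\LCP[x_\ell] \ge \LCP[x]$, the argument runs as follows. Every index $y \in (\Begin(x), x_\ell)$ has $\LCP[y] \ge \LCP[x]$. This holds for $y$ between $\Begin(x)$ and $x$ by the definition of $\Begin(x)$, and for $y$ between $x$ and $\End(x)$ by the definition of $\End(x)$. Hence $\LCP[y] \ge \LCP[x]$ everywhere on $(\Begin(x),\End(x))$. If $\LCP[x_\ell] > \LCP[x]$, the largest index to the left of $x_\ell$ with $\LCP$ below $\LCP[x_\ell]$ is at least $\Begin(x)$, since $\LCP[\Begin(x)] < \LCP[x] < \LCP[x_\ell]$. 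If $\LCP[x_\ell] = \LCP[x]$, the same search stops exactly at $\Begin(x)$. Either way, $\Begin(x_\ell) \ge \Begin(x)$.

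The main obstacle is bookkeeping the boundary cases: $\ell$ coinciding with $\Begin(x)$, the clamped value $\Begin(\cdot) := 1$ when no smaller $\LCP$ exists, and the convention $\LCP[1] = -1$. I would handle these first, using the strict inequality $\LCP[\Begin(x)] < \LCP[x] \le \Weight(\ell)$. Finally, I would note that Step~2 at iteration $\ell$ first removes any old triple for $c$ through $\Map[c]$ and then prepends the new triple. This gives the claim that the triple is added to the front of $\RowList$ at iteration $\ell$. For $\ell = 1$, Step~0 covers the insertion, because $\Weight(1) \ge 0$ forces $\BWT[2] \neq \BWT[1]$.
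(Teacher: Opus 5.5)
Your proposal is correct and follows essentially the paper's proof. Both define $x_\ell \in \{\ell,\ell+1\}$ so that $\Weight(\ell)=\LCP[x_\ell]$ and $\CurrB=\Begin(x_\ell)$, use Observation~\ref{obs-weight-lcp} to get $\LCP[x_\ell]\ge\LCP[x]$ and hence $\Begin(x_\ell)\ge\Begin(x)$, and conclude that $\ell$ remains the leftmost maximum-weight $c$-index on the sub-interval $[\Begin(x_\ell),\ell+1)\subseteq[\Begin(x),\End(x))$. Your boundary-case handling goes slightly beyond the paper: you rule out $x_\ell=\Begin(x)$, which makes explicit the strictness $\Begin(x)<x_\ell$ that the paper's one-line derivation of $\Begin(x_\ell)\ge\Begin(x)$ uses tacitly, and you cover the case $\ell=1$ via Step~0, which the paper does not mention.
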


\begin{proof}
Define $\ell \in \{\ell_{x,c}, \ell_{x,c}+1\}$ as follows: let
$\ell=\ell_{x,c}$ if $\Weight(\ell_{x,c})=\LCP[\ell_{x,c}]$, and
$\ell=\ell_{x,c}+1$ otherwise.
By construction, $\Weight(\ell_{x,c})=\LCP[\ell]$, and at iteration
$\ell_{x,c}$ we have $\CurrB=\Begin(\ell)$.

Since $\ell \in \{\ell_{x,c}, \ell_{x,c}+1\}$, it follows that
$\Begin(\ell)\le \ell_{x,c}$.
Moreover, by Observation~\ref{obs-weight-lcp}, $\Weight(\ell_{x,c})\ge \LCP[x]$.
Together with $\Weight(\ell_{x,c})=\LCP[\ell]$, this implies
$\LCP[\ell]\ge \LCP[x]$.
Because $\Begin(x)\le \ell_{x,c}$, $\Begin(\ell)\le \ell_{x,c}$, and
$\LCP[\ell]\ge \LCP[x]$, we obtain $\Begin(\ell)\ge \Begin(x)$.
Consequently,
$\ell_{x,c}\in [\Begin(\ell), \ell_{x,c}+1)\subseteq [\Begin(x), \End(x))$.
Therefore, $\ell_{x,c}=\Cand_c(\Begin(\ell), \ell_{x,c}+1)$.
Since $\CurrB=\Begin(\ell)$ at iteration $\ell_{x,c}$, we conclude that
$\ell_{x,c}=\Cand_c(\CurrB, \ell_{x,c}+1)$.
Hence, the tuple $(n-\SA[\ell_{x,c}]+1, c, \Weight(\ell_{x,c}))$ is added to the
front of $\RowList$ during the $\ell_{x,c}$-th iteration.
\end{proof}

Lemma \ref{lem-remove-from-rowlist} establishes that $\SA[\ell_{x, c}]+1\in \Result_c$, and thus $\FullCand\subseteq \Result$.

\begin{lemma}\label{lem-remove-from-rowlist}
The position $n-\SA[\ell_{x,c}]+1$ belongs to $\Result_c$.
\end{lemma}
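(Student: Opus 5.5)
By Lemma~\ref{lem-add-to-rowlist}, the triple $\tau=(n-\SA[\ell_{x,c}]+1, c, \Weight(\ell_{x,c}))$ is prepended to $\RowList$ at iteration $\ell_{x,c}$. A triple leaves $\RowList$ in exactly three ways: it is ejected in Step~1 and appended to $\Result_c$; it is flushed to $\Result_c$ in the final step; or it is replaced in Step~2 of some later iteration $i'$ because $\Map[c]$ points to it. In the first two cases we are done. So the whole proof reduces to ruling out replacement.

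Suppose, for contradiction, that $\tau$ is replaced at iteration $i'>\ell_{x,c}$. Let $\ell$ be as in the proof of Lemma~\ref{lem-add-to-rowlist}, so that $\Weight(\ell_{x,c})=\LCP[\ell]$. I will use three facts about $i'$.
\begin{itemize}
\item Since $\tau$ survived Step~1 in every iteration $\ell_{x,c}<j'\le i'$, we have $\LCP[j']\ge \Weight(\ell_{x,c})\ge \LCP[x]$ for all such $j'$, where the second inequality is Observation~\ref{obs-weight-lcp}.
\item Because $x>\Begin(x)$, $\Begin(x)\le\ell_{x,c}$, and every LCP value on $(\Begin(x),\End(x))$ is at least $\LCP[x]$, this gives $i'<\End(x)$.
\item $\BWT[i']=c$, and the test $i'=\Cand_c(\CurrB,i'+1)$ succeeded.
\end{itemize}

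Let $\ell'\in\{i',i'+1\}$ be the index with $\CurrB=\Begin(\ell')$ and $\Weight(i')=\LCP[\ell']$. I want to show that $\Begin(\ell')\le \ell_{x,c}$. Every LCP value strictly between $\ell_{x,c}$ and $i'$ (and at $i'$ itself) is at least $\Weight(\ell_{x,c})$, so two cases arise.
\begin{itemize}
\item If $\Weight(i')\le \Weight(\ell_{x,c})$, the first LCP value below $\LCP[\ell']$ to the left of $\ell'$ lies at or before $\ell_{x,c}$, so $\Begin(\ell')\le\ell_{x,c}$.
\item If $\Weight(i')>\Weight(\ell_{x,c})$, I instead show directly that this contradicts the maximality of $\ell_{x,c}$. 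Here $i'\in[\Begin(x),\End(x))$ and $\BWT[i']=c$, while $\Weight(i')>\Weight(\ell_{x,c})$; this is impossible since $\ell_{x,c}=\Cand_c(\Begin(x),\End(x))$.
\end{itemize}
In the remaining case $\Weight(i')\le\Weight(\ell_{x,c})$, the interval $[\CurrB,i'+1)$ contains $\ell_{x,c}$. Since $\BWT[\ell_{x,c}]=c$ and $\ell_{x,c}<i'$, the condition $i'=\Cand_c(\CurrB,i'+1)$ forces $\Weight(i')>\Weight(\ell_{x,c})$, because ties are broken toward the smallest index. This contradicts the case assumption.

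The main obstacle is the boundary subtlety when $\ell'=i'+1$. There the look-ahead entry $\LCP[i'+1]$ is not covered by the Step~1 survival guarantee, so the argument for $\Begin(\ell')\le\ell_{x,c}$ has to handle it separately. In that situation, however, $\Weight(i')=\LCP[i'+1]$. Either this value is at most $\Weight(\ell_{x,c})$, in which case the left scan from $i'+1$ passes over $i'$, whose LCP is at least $\Weight(\ell_{x,c})\ge\LCP[i'+1]$, and still reaches an index at or before $\ell_{x,c}$; or it exceeds $\Weight(\ell_{x,c})$ and the second case above applies. With replacement ruled out, $\tau$ ends up in $\Result_c$, either through Step~1 at the first iteration where $\LCP$ drops below its weight, namely $\End(\ell)$, or through the final flush.
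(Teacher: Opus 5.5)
Your proof is correct and follows essentially the same route as the paper's. Both arguments rule out a Step~2 replacement by combining two facts: the maximality of $\ell_{x,c}$ over $[\Begin(x),\End(x))$ caps the weight of any later competitor with $\BWT=c$, and all $\LCP$ values between $\ell_{x,c}$ and that competitor are at least $\Weight(\ell_{x,c})$, which forces $\CurrB\le\ell_{x,c}$ so that $\ell_{x,c}$ wins the $\Cand$ tie-break. The difference is only organizational: the paper first fixes the smallest $j>\ell_{x,c}$ with $\LCP[j]<\Weight(\ell_{x,c})$ and contradicts its minimality, which also covers your $\ell'=i'+1$ boundary case implicitly, whereas you argue directly at a hypothetical replacement iteration using Step~1 survival.
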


\begin{proof}
By Lemma~\ref{lem-add-to-rowlist}, the tuple
$(n-\SA[\ell_{x,c}]+1, c, \Weight(\ell_{x,c}))$
is added to the front of $\RowList$ at the $\ell_{x,c}$-th iteration.
Let $j>\ell_{x,c}$ be the smallest index such that
$\LCP[j]<\Weight(\ell_{x,c})$.
By Observation~\ref{obs-weight-lcp}, we have
$\Weight(\ell_{x,c})\ge \LCP[x]$, and by the definition of $\End(x)$,
$\LCP[x]>\LCP[\End(x)]$.
Hence, $\Weight(\ell_{x,c})>\LCP[\End(x)]$, which implies
$\ell_{x,c}<j\le \End(x)$.

We first show that the tuple
$(n-\SA[\ell_{x,c}]+1, c, \Weight(\ell_{x,c}))$
remains in $\RowList$ throughout all iterations $j'$ with
$\ell_{x,c}<j'<j$.
Indeed, for any such iteration $j'$, since
$\LCP[j']\ge \Weight(\ell_{x,c})$, the tuple cannot be removed in
Step~1 of the algorithm.
In Step~2, if $\BWT[j']\neq c$, the tuple is unaffected, as only the
tuple pointed to by $\Map[\BWT[j']]$ may be updated.
If $\BWT[j']=c$, then $\Weight(j')\le \Weight(\ell_{x,c})$, because
$\ell_{x,c}=\Cand_c(\Begin(x),\End(x))$ and
$\ell_{x,c}<j'<j\le \End(x)$.
Assuming for contradiction that
$j'=\Cand_c(\CurrB, j'+1)$ at iteration $j'$, we would have
$\CurrB>\ell_{x,c}$.
Moreover, $\LCP[\CurrB]<\Weight(j')$, since either
$\Weight(j')=\LCP[j']$ and $\CurrB=\Begin(j')$, or
$\Weight(j')=\LCP[j'+1]$ and $\CurrB=\Begin(j'+1)$.
This yields
$\ell_{x,c}<\CurrB\le j'<j$ and
$\LCP[\CurrB]<\Weight(j')\le \Weight(\ell_{x,c})$,
contradicting the minimality of $j$.
Therefore, the tuple is not removed in any iteration $j'$ with
$\ell_{x,c}<j'<j$.

If $j\le n$, then the tuple is removed from
$\RowList$ in Step~1 at iteration $j$, as $\LCP[j]<\Weight(\ell_{x,c})$, and $n-\SA[\ell_{x,c}]+1$ is appended to $\Result_c$.
Otherwise, no such index $j$ exists and the tuple remains in
$\RowList$ after $n$ iterations.
Then, the tuple is removed in the final step and its position entry is added to $\Result_c$.
Hence, $n-\SA[\ell_{x,c}]+1\in \Result_c$, as claimed.
\end{proof}


Lemma~\ref{lem-order} specifies the ordering of the output list.

\begin{lemma}\label{lem-order}
Let $\Result = \{\,r_1, r_2, \dots, r_k\,\}$. Then, the prefixes $T[1..r_1]$,  $T[1..r_2]$, $\dots$, $T[1..r_k]$ are sorted in co-lexicographical order. 
\end{lemma}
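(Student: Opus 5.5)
The plan is to reduce co-lexicographic order of the output prefixes to index order in the arrays. Since $\SA$ is the suffix array of $T^{rev}$, the prefixes $T[1..n-\SA[1]+1], T[1..n-\SA[2]+1],\dots$ are essentially sorted co-lexicographically. More precisely, the prefix $T[1..p_{\text{text}}]$ with $p_{\text{text}}=n-\SA[j]+1$ ends with the character $\BWT[j]$, preceded by $T[1..n-\SA[j]]$. The reversal of $T[1..n-\SA[j]]$ is the suffix of $T^{rev}$ at rank $j$, up to the sentinel convention.

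Hence, for two output positions $p_{\text{text}}$ and $q_{\text{text}}$ with candidate indices $j$ and $j'$, the co-lexicographic comparison of $T[1..p_{\text{text}}]$ and $T[1..q_{\text{text}}]$ is decided in two stages. First, it is decided by the last characters $\BWT[j]$ versus $\BWT[j']$. Only if these are equal does the comparison fall back to $T[1..n-\SA[j]]$ versus $T[1..n-\SA[j']]$, and that order is exactly the order of $j$ versus $j'$. The case where one string is a proper suffix of the other is consistent, because a shorter reversed string that is a prefix sorts earlier in $\SA$. The degenerate case $\SA[j]=1$ gives the full text ending in $\$$, and I will treat it separately. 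So the target order is: sort first by last character $c$, then by candidate index.

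Next, I use the structure of $\Result$. It is the concatenation $\Result_1,\dots,\Result_\sigma$, and every position in $\Result_c$ satisfies $T[p_{\text{text}}]=c$ by Invariant~1, or directly because triples carry $c=\BWT[j]$. So blocks for different characters are already correctly ordered. It remains to show that within each $\Result_c$ the positions are appended in increasing order of their candidate index $j$.

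Within $\Result_c$, I consider two triples with character $c$ that are appended in succession. I would show that whenever a $c$-triple with index $j$ is appended, every earlier-appended $c$-triple has a smaller index. The key fact is that $\Map[c]$ ensures at most one $c$-triple lives in $\RowList$ at any time. A new $c$-triple with index $j'$ is inserted only at iteration $j'$, either after the previous $c$-triple was ejected or after it was overwritten; an overwritten triple is discarded and never appended. Therefore the $c$-triples that get appended were inserted, and then ejected, in strictly increasing iteration order, with the final step handling at most one leftover $c$-triple, which was the last one inserted. Since a triple inserted at iteration $j$ has candidate index $j$, the appended indices are increasing.

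The main obstacle is the sentinel bookkeeping in the first step. $T^{rev}$ keeps $\$$ at the end, so the comparison between prefix $T[1..n-\SA[j]]$ and the suffix of $T^{rev}$ at rank $j$ needs care. The position-$n$ prefix, which ends with $\$$, must be shown to land at the correct extreme, presumably first, given that $\$$ is the smallest character. I would handle this by checking directly that $\$=1$ gives $\Result_{\$}=\{n\}$ at the front, and apply the generic argument to all other positions.
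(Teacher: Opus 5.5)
Your proposal is correct and follows the same route as the paper. The blocks $\Result_1,\dots,\Result_\sigma$ are ordered by the last character $c=T[p_{\text{text}}]$. Within a block, the co-lexicographic order of the prefixes $T[1..n-\SA[j]]$, inherited from the suffix array of $T^{rev}$, is preserved by appending the common character $c$.

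You also explicitly justify that each $\Result_c$ is filled in increasing index order: $\Map[c]$ keeps at most one $c$-triple in $\RowList$, and overwritten triples are never appended. The paper's proof uses this step without comment, so your version is more complete on that point.
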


\begin{proof}
Recall that $\SA[1..n]$ is the suffix array of $T^{rev}[1..n]$.
Therefore, the prefixes $T[1..n-\SA[1]], T[1..n-\SA[2]], \dots, T[1..n-\SA[n]]$ are sorted in co-lexicographical order.
This implies that for any character $c \in \Sigma$ and any positions $i, j \in \Result_c$ with $i < j$, we have
$T[1..n-\SA[i]] \prec_{\mathrm{colex}} T[1..n-\SA[j]]$, that is, $T[1..n-\SA[i]]$ is co-lexicographically smaller than $T[1..n-\SA[j]]$.
Since $T[n-\SA[i]+1] = T[n-\SA[j]+1] = c$, it follows that
$T[1..n-\SA[i]+1] \prec_{\mathrm{colex}} T[1..n-\SA[j]+1]$.

Moreover, for any characters $c, c' \in \Sigma$ with $c < c'$, and for any positions $i \in \Result_c$ and $j \in \Result_{c'}$, we have
$T[1..n-\SA[i]+1] \prec_{\mathrm{colex}} T[1..n-\SA[j]+1]$.

Combining the arguments above establishes the lemma.
\end{proof}




Lemmas~\ref{lem-result_c-to-L_c} and~\ref{lem-remove-from-rowlist} together imply that the output list $\Result$ is exactly $\FullCand$. By Lemma~\ref{lem-min-size}, $\Result$ is a minimum-size suffixient set, and Lemma~\ref{lem-order} shows that its indices are sorted consistently with the definition of suffixient array, completing the proof of the correctness. 

\subsection{The Adjusted Algorithm}
\label{sect-adjust-alg}
We have shown that the algorithm in Section~\ref{sect-alg} always computes a suffixient array correctly.
But, we have not specified how to check whether or not $i=\Cand_c(\CurrB, i+1)$ at iteration $i$, where $c=\BWT[i]$.
In this section, we specify the procedures to verify $i=\Cand_c(\CurrB, i+1)$.

In the data structure part, we construct, in addition, an array $\posprevWeight[1..\sigma]$ consisting of $\sigma$ entries, where each entry is a pair of the form $(index, weight)$ drawn from $\{1,\dots,n\} \times \{-1,0,\dots,n\}$, initially set to $(0, -1)$.
At iteration $i$, for $i>1$, the following invariant maintains: For each \(c \in \Sigma\), the field $\posprevWeight[c].index$ stores the largest index $j < i$ such that $\Weight(j) > -1$ and $\BWT[j] = c$, and $\posprevWeight[c].weight$ stores the corresponding value $\Weight(j)$. If no such index exists, the entry $\posprevWeight[c]$ is set to $(0, -1)$.


It holds that $i = \Cand_c(\CurrB, i+1)$ if and only if at least one of the following conditions is satisfied:
\textbf{C1:} $\Weight(i) > \LCP[i]$;
\textbf{C2:} $\Weight(i) > -1$ and $\posprevWeight[c].index < \CurrB$; or
\textbf{C3:} $\posprevWeight[c].weight < \Weight(i)$.
Each condition can be checked in constant time in a one-pass setting. In the remainder of this section, we prove this equivalence.

\begin{lemma}
    If $i=\Cand_c(\CurrB, i+1)$ at iteration $i$, then one of \textbf{C1}-\textbf{C3} holds.
\end{lemma}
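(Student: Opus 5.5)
We argue by contraposition: assume $i=\Cand_c(\CurrB,i+1)$ and that \textbf{C1} and \textbf{C2} both fail, and derive \textbf{C3}. Since $i=\Cand_c(\CurrB,i+1)$ is not $-1$, Definition~\ref{def-cand} gives $\Weight(i)>-1$. Failure of \textbf{C1} then gives $\Weight(i)\le\LCP[i]$. Because $\Weight(i)\ge 0$, the failure of \textbf{C2} must come from its second part, so $\posprevWeight[c].index\ge\CurrB$. In particular $\posprevWeight[c].index\neq 0$, and by the invariant of $\posprevWeight$ it is an index $j$ with
\[
\CurrB\le j<i,\qquad \BWT[j]=c,\qquad \Weight(j)>-1,\qquad \posprevWeight[c].weight=\Weight(j).
\]

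The key step is to exploit the minimality in Definition~\ref{def-cand}. The index $j$ lies in $[\CurrB,i+1)$, carries the character $c$, and has nonnegative weight, so it belongs to the set over which the maximum defining $\Cand_c(\CurrB,i+1)$ is taken. Since $i$ is the \emph{smallest} index attaining that maximum and $j<i$, the weight of $j$ cannot reach it. This yields $\Weight(j)<\Weight(i)$, that is, $\posprevWeight[c].weight<\Weight(i)$, which is exactly \textbf{C3}.

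The only delicate point is making sure the $\posprevWeight$ invariant is read correctly at iteration $i$. The stored index must be the largest $j<i$, not $\le i$, so $\posprevWeight$ must not yet have been updated with index $i$ when the test is performed. With that convention, the argument above shows that \textbf{C1} is not even needed in the forward direction whenever $\posprevWeight[c].index\ge\CurrB$.

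Lemma~\ref{lem-consistency} is not required here. It becomes relevant for the converse direction, where one must show that the weights of all earlier occurrences of $c$ in $[\CurrB,i)$ are dominated by the most recent one.
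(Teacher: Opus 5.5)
Your proof is correct, and it takes a shorter route than the paper's. The paper splits on \textbf{C1}. In the remaining case $\Weight(i)\le\LCP[i]$ it uses Lemma~\ref{lem-consistency} to show that no index $j'\in(\CurrB,i)$ with $\BWT[j']=c$ has nonnegative weight, so $\posprevWeight[c].index\le\CurrB$. If equality holds, it deduces $\BWT[j]=c$ for all $j\in[\CurrB,i]$, hence $\posprevWeight[c].weight=\Weight(\CurrB)=\LCP[\CurrB]$. It then gets \textbf{C3} from $\LCP[\ell]>\LCP[\Begin(\ell)]=\LCP[\CurrB]$, where $\ell\in\{i,i+1\}$ satisfies $\Weight(i)=\LCP[\ell]$.

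You instead read \textbf{C3} straight off the tie-breaking clause of Definition~\ref{def-cand}. Any $j\in[\CurrB,i)$ with $\BWT[j]=c$ and $\Weight(j)\ge 0$ enters the maximum, and $i$ is the smallest maximizer, so $\Weight(j)<\Weight(i)$. Applying this to $j=\posprevWeight[c].index$, which is a genuine index because $\CurrB\ge 1$, finishes the proof.

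Your argument uses no LCP or run-boundary structure. It also shows that \textbf{C1} is superfluous in this direction; indeed \textbf{C1} forces $\CurrB=i$, so \textbf{C2} holds anyway. What the paper's route buys is the structural fact that a weighted $c$-occurrence in $[\CurrB,i)$ can only sit at $\CurrB$. That is the information needed for the converse, where minimality of $\Cand$ cannot be assumed. One refinement to your closing remark: what the converse really needs is the argument inside the proof of Lemma~\ref{lem-consistency}, applied to $\posprevWeight[c].index$, not its statement. The statement's hypothesis already presupposes $i=\Cand_c(\CurrB,i+1)$, the very conclusion the converse is trying to establish.
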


\begin{proof}
Since $i=\Cand_c(\CurrB, i+1)>-1$, Definition~\ref{def-cand} implies that $\Weight(i)>-1$.
If $\Weight(i)>\LCP[i]$, then \textbf{C1} trivially holds.
Otherwise, by Lemma~\ref{lem-consistency}, either $\BWT[j]=c$ for every $j\in [\CurrB, i)$ or $\BWT[j]\ne c$ for every $j\in [\CurrB, i)$.
In either case, observe that $\Weight(j')=-1$ for every $j'\in (\CurrB, i)$ with $\BWT[j']=c$, so $\posprevWeight[c].index \le \CurrB$ at iteration $i$.

If $\posprevWeight[c].index<\CurrB$, then \textbf{C2} holds.
Otherwise, $\posprevWeight[c].index=\CurrB$, and $\BWT[\CurrB]=c$.
Applying Lemma~\ref{lem-consistency}, the fact $\BWT[\CurrB]=c$ implies that $\BWT[j]=c$ for every $j\in [\CurrB, i)$, so $\posprevWeight[c].weight=\Weight(\CurrB)=\LCP[\CurrB]$.

Let $\ell:=i$ if $\Weight(i)=\LCP[i]$; otherwise, $\ell:=i+1$, so $\CurrB=\Begin(\ell)$ at iteration $i$ and $\Weight(i)=\LCP[\ell]$.
As $\LCP[\ell]>\LCP[\Begin(\ell)]=\LCP[\CurrB]=\posprevWeight[c].weight$, \textbf{C3} holds.

Overall, one of three condition must hold, completing the proof.
\end{proof}

\begin{lemma}
If at iteration $i$ at least one of the conditions \textbf{C1}--\textbf{C3} holds,
then $i=\Cand_c(\CurrB, i+1)$.
\end{lemma}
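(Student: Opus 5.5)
The plan is to unfold the definition of $\Cand$ directly. Throughout, we use that Step~2 only performs the test when $\Weight(i) > -1$. As in the proofs of Lemma~\ref{lem-consistency} and Lemma~\ref{observe-2}, let $\ell := i$ if $\Weight(i)=\LCP[i]$ and $\ell := i+1$ otherwise. Then $\CurrB=\Begin(\ell)$ and $\Weight(i)=\LCP[\ell]$. By the definition of $\Begin(\cdot)$, this gives the key inequality
\[
\LCP[y] \ge \LCP[\ell] = \Weight(i) \quad \text{for every } y \in (\CurrB, \ell).
\]
To show $i=\Cand_c(\CurrB,i+1)$, it suffices to prove the following claim: every index $j\in[\CurrB,i)$ with $\BWT[j]=c$ and $\Weight(j)\ge 0$ satisfies $\Weight(j)<\Weight(i)$. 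Since $\Cand$ picks the smallest maximizer, a tie would also have to be excluded. We treat the three conditions separately.

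\textbf{C1.} If $\Weight(i)>\LCP[i]$, then $\Weight(i)=\LCP[i+1]>\LCP[i]$. Hence $\ell=i+1$ and $\CurrB=\Begin(i+1)=i$, as already noted after Lemma~\ref{lem-consistency}. The interval $[\CurrB,i+1)$ is then $\{i\}$, and since $\Weight(i)>-1$, we get $\Cand_c(i,i+1)=i$.

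\textbf{C2.} By the invariant on $\posprevWeight$, $\posprevWeight[c].index$ is the largest $j<i$ with $\BWT[j]=c$ and $\Weight(j)>-1$. If this index is $<\CurrB$, no $j\in[\CurrB,i)$ competes with $i$, so $i$ is the unique index in the set of Definition~\ref{def-cand}, and the claim holds.

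\textbf{C3.} Assume $\posprevWeight[c].weight<\Weight(i)$ and let $j^*:=\posprevWeight[c].index$. If $j^*<\CurrB$, we are in case C2, so assume $j^*\in[\CurrB,i)$. Then $\BWT[j^*]=c$ and $0\le\Weight(j^*)<\Weight(i)$. The main step is to show that $j^*=\CurrB$, which leaves no other competitor in $[\CurrB,j^*)$. Since $\Weight(j^*)\ge 0$, $j^*$ is a run boundary, and $\Weight(j^*)$ is at least $\LCP[j^*]$ (if $j^*$ is a start-run boundary) or at least $\LCP[j^*+1]$ (if $j^*$ is an end-run boundary).
\begin{itemize}
\item If $j^*>\CurrB$ and $j^*$ is a start-run boundary, then $j^*\in(\CurrB,\ell)$. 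The key inequality gives $\Weight(j^*)\ge\LCP[j^*]\ge\Weight(i)$, a contradiction.
\item If $j^*$ is an end-run boundary, then $j^*+1\in(\CurrB,i]\subseteq(\CurrB,\ell]$. Either $j^*+1<\ell$, and the key inequality applies, or $j^*+1=\ell$, and then $\LCP[j^*+1]=\Weight(i)$ directly. In both cases $\Weight(j^*)\ge\Weight(i)$, again a contradiction.
\end{itemize}
Hence $j^*=\CurrB$, and it is not an end-run boundary, although that part is not even needed. By the maximality of $j^*$, no index in $(j^*,i)$ competes. So the only competitor in $[\CurrB,i)$ is $j^*$, whose weight is strictly smaller than $\Weight(i)$. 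Therefore $i=\Cand_c(\CurrB,i+1)$.

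The main obstacle is the boundary case in C3 where $j^*+1$ coincides with $i$ while $\ell=i+1$. This happens when $\Weight(i)=\LCP[i+1]\neq\LCP[i]$. It is handled by noting that $i\in(\CurrB,\ell)$, so the key inequality still yields $\LCP[i]\ge\Weight(i)$. The careful choice of $\ell$ and the inequality $\LCP[y]\ge\Weight(i)$ on $(\CurrB,\ell)$ is what makes all cases uniform.
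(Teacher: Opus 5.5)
Your proof is correct, and your handling of \textbf{C1} and \textbf{C2} is the same as the paper's. The two arguments differ in \textbf{C3}.

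The paper assumes that neither \textbf{C1} nor \textbf{C2} holds. It then invokes Lemma~\ref{lem-consistency} to conclude that $\BWT$ is either always $c$ or never $c$ on $[\CurrB,i)$. From this it infers that every $c$-index in $(\CurrB,i)$ has weight $-1$, so $\posprevWeight[c].index=\CurrB$. However, Lemma~\ref{lem-consistency} has $i=\Cand_c(\CurrB,i+1)$ among its hypotheses, which is exactly the conclusion being proved. The paper's \textbf{C3} argument is therefore circular as written.

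Your argument avoids this. You work only with the single index $j^*=\posprevWeight[c].index$ and use the inequality $\LCP[y]\ge\Weight(i)$ on $(\CurrB,\ell)$. If $j^*$ were a start-run boundary strictly inside $(\CurrB,i)$, or an end-run boundary anywhere in $[\CurrB,i)$, its weight would be at least $\Weight(i)$, contradicting \textbf{C3}. Hence $j^*=\CurrB$. This is essentially the core computation from the proof of Lemma~\ref{lem-consistency}, except that the contradiction comes from the strict inequality in \textbf{C3} rather than from the $\Cand$ hypothesis. So your argument is self-contained and in effect repairs the paper's dependency. It is also more local, since you never need homogeneity of the whole block $\BWT[\CurrB..i-1]$.

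One small remark: the boundary case you flag as the main obstacle cannot occur. If $j^*$ is an end-run boundary, then $\BWT[j^*+1]\neq\BWT[j^*]=c=\BWT[i]$, so $j^*+1\neq i$. Thus in your second bullet you always have $j^*+1\le i-1<\ell$, and the key inequality applies directly. Your treatment of that case is still valid, so this is only a simplification.
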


\begin{proof}
We consider the three conditions separately.

\textbf{Case C1: $\Weight(i)>\LCP[i]$.}
By the definition of $\Weight(\cdot)$, this implies $\Weight(i)=\LCP[i+1]>\LCP[i]$,
and hence $\CurrB=\Begin(i+1)=i$.
Therefore, $\Cand_c(\CurrB, i+1)=\Cand_c(i, i+1)=i$. 

\textbf{Case C2: $\Weight(i)>-1$ and $\posprevWeight[c].index<\CurrB$.}
By the invariant maintained by the data structure $\posprevWeight$, this implies
that $\Weight(j)=-1$ for every $j\in[\CurrB,i)$; otherwise,
$\posprevWeight[c].index$ would be at least $\CurrB$.
Hence, $i$ is the smallest index in $[\CurrB,i]$ with nonnegative weight, and
therefore $\Cand_c(\CurrB, i+1)=i$.

\textbf{Case C3: $\posprevWeight[c].weight < \Weight(i)$.}
Suppose that neither \textbf{C1} nor \textbf{C2} holds, but \textbf{C3} does.
Thus, $\Weight(i)\le\LCP[i]$ and $\posprevWeight[c].index\ge \CurrB$.
Since $\Weight(i)\le\LCP[i]$, Lemma~\ref{lem-consistency} implies that either $\BWT[j]=c$ for all $j\in[\CurrB,i)$ or $\BWT[j]\neq c$ for all such $j$.
In both cases, every index $j'\in(\CurrB,i)$ with $\BWT[j']=c$ satisfies
$\Weight(j')=-1$, which implies $\posprevWeight[c].index=\CurrB$.
As $\posprevWeight[c].weight=\Weight(\CurrB)$ and
$\Weight(i)>\posprevWeight[c].weight$ by \textbf{C3}, we have
$\Cand_c(\CurrB, i+1)=i$.
%
\end{proof}

\subsection{The Complexity of the Adjusted Algorithm}
\label{sect-complexity}

We first analyze the working space.
By Proposition~\ref{prop}, the monotone stack over $\LCP[1..n]$ contains $\mathcal O(h)$
triples and thus uses $\mathcal O(h)$ words of space, where $h$ is the height of the suffix
tree built over the reversed text.
The doubly-linked list $\RowList$, together with the arrays
$\Map[1..\sigma]$ and $\posprevWeight[1..\sigma]$, requires $\mathcal O(\sigma)$ space.
Finally, the lists $\Result_c$ for $c \in \Sigma$ store at most $\chi \le n$
positions in total, where $\chi$ is the size of the suffixient array.
Hence, excluding the $\SA$, $\LCP$, and $\BWT$ arrays, the overall working space is
$\mathcal O(\chi + \sigma + h) = \mathcal O(\chi + h)$, since $\sigma \le \chi$.

We now analyze the running time.
The algorithm performs a single left-to-right scan of the arrays
$\SA$, $\LCP$, and $\BWT$, executing $n$ iterations in total.
The operations $\Weight(\cdot)$ and $\Begin(\cdot)$ are invoked at most twice per
iteration, yielding $\mathcal O(1)$ amortized time per iteration.

By Lemma~\ref{observe-2}, the triples in $\RowList$ are maintained in
non-increasing order of their weights.
Thus, in {Step 1} of the $i$-th iteration, each triple
$(p_{\text{text}}, c, w)$ with $w > \LCP[i]$ can be identified and
removed from the head of $\RowList$ in constant time.

In the second step, at most one triple is removed from $\RowList$ and at most one
new triple is inserted at its front.
Using the pointer stored in $\Map[\BWT[i]]$, the triple to be removed can be
located in constant time.
Since at most one triple is added per iteration, at most $n$ triples are added to
and removed from $\RowList$ over the entire execution.

After all $n$ iterations, the remaining triples in $\RowList$, if any, are
enumerated in $\mathcal O(\sigma)$ time.
Concatenating the lists $\Result_1, \dots, \Result_\sigma$ to produce the output
takes an additional $\mathcal O(\sigma)$ time.
Overall, the total running time is $\mathcal O(n + \sigma) = \mathcal O(n)$, assuming $\sigma \le n$.

Combining Lemmas~\ref{lem-result_c-to-L_c}, \ref{lem-remove-from-rowlist}, and
\ref{lem-order} with the analysis above, we obtain the following result.

\begin{theorem}\label{theorem}
By scanning the arrays $\BWT[1..n]$, $\SA[1..n]$, and $\LCP[1..n]$ of the reversed
input text $T[1..n]$ over an alphabet of size $\sigma$ in a single pass, one can
construct a suffixient array of $T[1..n]$ in $\mathcal O(n)$ time using $\mathcal O(\chi + h)$ words
of working space in the worst case, in addition to these arrays, where $\chi$
denotes the size of the suffixient array and $h$ denotes the height of the suffix
tree built over the reversed text.
\end{theorem}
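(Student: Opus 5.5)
The theorem is an assembly of results already established, and I will prove it in three parts: correctness, running time, and working space.

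For correctness, I will argue that the list $\Result$ returned by the adjusted algorithm equals $\FullCand$ and is correctly ordered. First I note that the adjusted algorithm of Section~\ref{sect-adjust-alg} differs from the one in Section~\ref{sect-alg} only in how it evaluates the test $i=\Cand_c(\CurrB,i+1)$. The two lemmas of Section~\ref{sect-adjust-alg} show that this test is equivalent to the disjunction of \textbf{C1}--\textbf{C3}. So the adjusted algorithm makes exactly the same updates to $\RowList$, $\Map$, and the lists $\Result_c$ at every iteration. One point needs checking here: the invariant for $\posprevWeight$ is maintained. After Step~2 of iteration $i$, I set $\posprevWeight[\BWT[i]]:=(i,\Weight(i))$ whenever $\Weight(i)>-1$, and this takes constant time. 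With that in place:
\begin{itemize}
\item Lemma~\ref{lem-result_c-to-L_c} gives $\Result\subseteq\FullCand$.
\item Lemma~\ref{lem-remove-from-rowlist} gives $\FullCand\subseteq\Result$.
\item Lemmas~\ref{lem-min-size} and~\ref{lem-suffixient} make $\Result$ a minimum-size suffixient set.
\item Lemma~\ref{lem-order} gives the colexicographic order, so $\Result$ is a suffixient array.
\end{itemize}
I also have to confirm that no position enters $\Result$ twice. A triple is inserted into $\RowList$ at most once, at its own iteration, and leaves it exactly once, so each position is appended to its $\Result_c$ at most once.

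For the single-pass claim, I will check that iteration $i$ reads only $(\BWT[i],\SA[i],\LCP[i])$ and the look-ahead triple at $i+1$. This suffices because:
\begin{itemize}
\item $\Weight(i)$ depends only on $\BWT[i-1],\BWT[i],\BWT[i+1],\LCP[i],\LCP[i+1]$, and $\BWT[i-1]$ can be cached from the previous iteration.
\item $\Begin(i+1)$ can be computed by the monotone stack once $\LCP[i+1]$ is visible. The repeated-call check, where the top tuple already has $index=i+1$, makes this consistent with the later call at iteration $i+1$.
\end{itemize}

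For the running time, I will use amortized counting:
\begin{itemize}
\item Each stack tuple is pushed once and popped at most once, so the stack costs $\mathcal O(n)$ in total.
\item Each iteration inserts at most one triple into $\RowList$. So the total number of removals, whether in Step~1 from the head (valid by Lemma~\ref{observe-2}) or in Step~2 through $\Map$, is at most $n$.
\item The final flush and the concatenation of the $\sigma$ lists cost $\mathcal O(\sigma)\subseteq\mathcal O(n)$.
\end{itemize}

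For space:
\begin{itemize}
\item The stack holds $\mathcal O(h)$ words by Proposition~\ref{prop}.
\item $\RowList$ holds at most one triple per character, since $\Map$ enforces uniqueness. Together with $\Map$ and $\posprevWeight$ this is $\mathcal O(\sigma)$.
\item The lists $\Result_c$ hold $|\FullCand|=\chi$ positions.
\end{itemize}
Since $\sigma\le\chi$ (each character $c$ with a right-maximal extension of the empty string forces an element), the total is $\mathcal O(\chi+h)$.

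The main obstacle I expect is making the equivalence with the unadjusted algorithm fully rigorous. In particular, the $\posprevWeight$ invariant and the interaction of look-ahead with the stack must hold at the boundary iterations $i=1$, where Step~0 applies, and $i=n$, where there is no $i+1$. At $i=n$ I would treat $\BWT[n+1]$ as a distinct sentinel and $\LCP[n+1]=-1$, so that $\Weight(n)$ and $\Begin(n+1)$ are still well defined.
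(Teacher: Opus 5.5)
Your proposal is correct and takes essentially the same route as the paper. The paper obtains the theorem by combining Lemmas~\ref{lem-result_c-to-L_c}, \ref{lem-remove-from-rowlist}, \ref{lem-min-size} and~\ref{lem-order} (plus the C1--C3 equivalence) with the same time and space accounting you give: the stack bounded via Proposition~\ref{prop}, at most one $\RowList$ insertion per iteration, head ejection justified by Lemma~\ref{observe-2}, and $\sigma\le\chi$. Your extra checks (maintaining $\posprevWeight$ after Step~2 including at $i=1$, duplicate-freeness of $\Result$, the cached repeated call for $\Begin(i+1)$, and treating the final flush as Step~1 of a virtual iteration $n+1$ with $\LCP[n+1]=-1$ so that Lemma~\ref{lem-result_c-to-L_c} also covers it) are details the paper leaves implicit, not a different argument.
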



\bibliography{lipics-v2021-sample-article}

\appendix

\newpage

\section{The Pseudocode for Computing $\Weight(i)$}
\label{app-pseudo-weight}

Let ${\mathtt{pre\_bwt}}:=\BWT[x-1]$ if $x>1$ and $-1$ otherwise, ${\mathtt{curr\_bwt}}:=\BWT[x]$, and ${\mathtt{next\_bwt}}:=\BWT[x+1]$ if $x<n$ and $-1$ otherwise.
Similarly, let ${\mathtt{curr\_lcp}}:=\LCP[x]$, and ${\mathtt{next\_lcp}}:=\LCP[x+1]$ if $x<n$, and $-1$ otherwise.
The procedure \texttt{Compute\_$\Weight$} determines $\Weight(x)$ as Algorithm \ref{alg-compute-weight}.
By scanning the $\LCP[1..n]$ and $\BWT[1..n]$ arrays in a single pass, we can apply Algorithm~\ref{alg-compute-weight} to compute $\Weight(x)$ for all $x=1, 2, \dots$.

\begin{algorithm}
\caption{\label{alg-compute-weight} \texttt{Compute}\_$\Weight({\mathtt{pre\_bwt}}, {\mathtt{curr\_bwt}}, {\mathtt{next\_bwt}}, {\mathtt{curr\_lcp}}, {\mathtt{next\_lcp}})$}
\begin{tabbing}
\noindent
\hspace{1cm}\= \hspace{1cm}\= \hspace{1cm}\= \kill
01. \> {is\_start} $\leftarrow ({\mathtt{pre\_bwt}} \neq -1)$ \textbf{and} $({\mathtt{curr\_bwt}} \neq {\mathtt{pre\_bwt}})$; \\
02. \> {is\_end} $\leftarrow ({\mathtt{next\_bwt}} \neq -1)$ \textbf{and} $({\mathtt{curr\_bwt}} \neq {\mathtt{next\_bwt}})$; \\
03. \> \textbf{if} {is\_start} \textbf{then}\\
04. \> \> $w_{start} \leftarrow {\mathtt{curr\_lcp}}$;\\
05. \> \textbf{else}\\
06. \> \> $w_{start} \leftarrow -1$;\\
03. \> \textbf{if} {is\_end} \textbf{then}\\
04. \> \> $w_{end} \leftarrow {\mathtt{next\_lcp}}$;\\
05. \> \textbf{else}\\
06. \> \> $w_{end} \leftarrow -1$;\\
07. \> \textbf{return} $\max(w_{start}, w_{end});$
\end{tabbing}
\vspace{-\baselineskip}
\end{algorithm}

\section{The Pseudocode for Computing $\Begin(i)$}
\label{app-pseudo-begin}

\begin{algorithm}
\caption{\texttt{Initial-Stack}()}
\begin{tabbing}
\hspace{1cm}\= \hspace{1cm}\= \hspace{1cm}\= \hspace{1cm}\= \kill
01. \> $S \leftarrow$ an empty stack \\
02. \> push $(1, -1, -1)$ as $(index, val, b\_val)$ onto $S$ \\
03. \> \textbf{return} $S$
\end{tabbing}
\vspace{-\baselineskip}
\end{algorithm}

\begin{algorithm}
\caption{\texttt{Compute\_b}($S, \LCP[i], i$)}
\begin{tabbing}
\hspace{1cm}\= \hspace{1cm}\= \hspace{1cm}\= \hspace{1cm}\= \kill
01. \> $tuple \leftarrow S.top()$ \\
02. \> \textbf{if} $tuple.index = i$ \textbf{then} \\
03. \> \> \textbf{return} $tuple.b\_val$;\\ 
04. \> \textbf{while} $tuple.val \ge \LCP[i]$ \textbf{do} \\
05. \> \> $S.pop()$ \\
06. \> \> $tuple \leftarrow S.top()$ \\
07. \> \textbf{end while} \\
08. \> push $(i, \LCP[i], tuple.index)$ onto $S$ \\
09. \> \textbf{return} $tuple.index$
\end{tabbing}
\vspace{-\baselineskip}
\end{algorithm}

\section{The Pseudocode and an Example for Computing Suffixient Arrays}
\label{app-pseudo-sa}

\begin{algorithm}
\caption{\label{alg-sA} \texttt{Compute Suffixient Arrays}($\BWT[1..n], \SA[1..n], \LCP[1..n]$)}
\begin{tabbing}
\hspace{0.8cm}\= \hspace{0.8cm}\= \hspace{0.8cm}\= 
\hspace{0.8cm}\= \hspace{0.8cm}\= \hspace{0.8cm}\=\hspace{7cm}\=\kill
01. \> $\BWT[n+1]\leftarrow \LCP[n+1] \leftarrow -1$;\\
02. \> $\Map[1..\sigma] \leftarrow \{\}$; \\
03. \> $\RowList \leftarrow$ doubly-linked list; \\
04.\> $S \leftarrow$ \texttt{Initial-Stack}(); \\
05. \> \textbf{for} $i \leftarrow 1$ \textbf{to} $\sigma$ \textbf{do} \\
06. \> \> $\Result_i \leftarrow$ an empty list; \\
07. \> \textbf{if} $\BWT[2]\ne \BWT[1]$ \textbf{then} \\
08. \> \> $c \leftarrow BWT[1]$; \\
09. \> \> append $(n - \SA[1] + 1, \BWT[1], \LCP[2])$ as $(p_{\text{text}}, char, weight)$ to the front of $\RowList$; \\
10. \> \> $\Map[c] \leftarrow$ head of $\RowList$; \\
11. \> \textbf{for} $i \leftarrow 2$ \textbf{to} $n$ \textbf{do} \\
12. \> \> $\ell \leftarrow \LCP[i]$; \> \> \> \> \> $\triangleright$ \textit{Step 1 starts}\\
13. \> \> $headTriple \leftarrow \RowList.head$; \\
14. \> \> \textbf{while} $\ell < headTriple.weight$ \textbf{do} \\
15. \> \> \> append $headTriple.p_{\text{text}}$ to $\Result_{headTriple.char}$;  \\
16. \> \> \> remove the head entry from $\RowList$; \\
17. \> \> \> $\Map[headTriple.char] \leftarrow \texttt{null}$; \\
18. \> \> $currWeight \leftarrow$ Compute\_$\Weight(\BWT[i-1], \BWT[i], \BWT[i+1], \LCP[i], \LCP[i+1])$; \\
19. \> \> $curr\_b \leftarrow$ Compute\_b$(S, \LCP[i], i)$; \> \> \> \> \>$\triangleright$ \textit{Step 1 ends} \\
20. \> \> \textbf{if} $currWeight > -1$ \textbf{then} \> \> \> \> \> $\triangleright$ \textit{Step 2 starts}\\
21. \> \> \> \textbf{if} $currWeight\ne \LCP[i]$ \textbf{then} \\
22. \> \> \> \> $curr\_b \leftarrow$ Compute\_b$(S, \LCP[i+1], i+1)$; \\
23. \> \> \> $c \leftarrow \BWT[i]$; \\
24. \> \> \> \textbf{if} $i=\Cand_c(curr\_b, i+1)$ \textbf{then} \\
25. \> \> \> \> \textbf{if} $\Map[c] \ne null$ \textbf{then} \\
26. \> \> \> \> \>$triple \leftarrow \Map[c]$; \\
27. \> \> \> \> \>   remove $triple$ from $\RowList$; \\
28. \> \> \> \>  append $(n - \SA[i] + 1, \BWT[i], currWeight)$ to the front of $\RowList$; \\
29. \> \> \> \>  $\Map[c] \leftarrow$ head of $\RowList$; \>\>\> $\triangleright$ \textit{Step 2 ends} \\
30. \> \textbf{while} $\RowList\ne \texttt{null}$ \textbf{do} \>\>\>\>\> \>$\triangleright$ \textit{Final step starts} \\
31. \> \> $headTriple \leftarrow \RowList.head$; \\
32. \> \> append $headTriple.p_{\text{text}}$ to $\Result_{headTriple.char}$;  \\
33. \> \> remove the head entry from $\RowList$; \\
34. \>  \textbf{return} $[\Result_1, \Result_2, \dots, \Result_{\sigma}]$; \>\>\>\>\> \>$\triangleright$ \textit{Final step ends}
\end{tabbing}
\vspace{-\baselineskip}
\end{algorithm}

\begin{sidewaystable}[p] 
\centering
\caption{\label{table-alg} Execution trace of the algorithm computing  a suffixient array for the input $T = \text{AGCACAGCA}\$$. The table provides  $T^{rev}[1..n]$, the $\LCP, \BWT$, and $\SA$ arrays, the variable $curr\_b$, the contents of $\RowList$, and the character-specific result lists $Result_c$. The final algorithm output is $[10, 1, 5, 7]$ as the suffixient array.}
\scriptsize 
\renewcommand{\arraystretch}{1.5} 
\setlength{\tabcolsep}{4pt} 

\begin{tabular}{|c|c|c|c|c|c|c|c|c|c|c|c|}
\hline
\textbf{$i$} & \textbf{1} & \textbf{2} & \textbf{3} & \textbf{4} & \textbf{5} & \textbf{6} & \textbf{7} & \textbf{8} & \textbf{9} & \textbf{10} & \textbf{n+1} \\ \hline
$T[i]$ & A & G & C & A & C & A & G & C & A & \$ & --- \\ \hline
$T^{rev}[i]$ & A & C & G & A & C & A & C & G & A & \$ & --- \\ \hline
$\LCP[i]$ & -1 & 0 & 1 & 2 & 4 & 0 & 1 & 3 & 0 & 2 & --- \\ \hline
$\BWT[i]$ & A & G & G & C & \$ & A & A & A & C & C & --- \\ \hline
$\Weight(i)$ & 0 & 0 & 2 & 4 & 4 & 0 & -1 & 0 & 0 & -1 & --- \\ \hline
$\SA[i]$ & 10 & 9 & 4 & 6 & 1 & 5 & 7 & 2 & 8 & 3 & --- \\ \hline
$n-\SA[i]+1$ & 1 & 2 & 7 & 5 & 10 & 6 & 4 & 9 & 3 & 8 & --- \\ \hline
$\LCP[i]=\Weight(i)?$ & F & T & F & F & T & T & F & F & T & F & --- \\ \hline
$curr\_b$ & 1 & 1 & 3 & 4 & 4 & 1 & 6 & 1 & 1 & 9 & --- \\ \hline
$\Weight(i) > -1?$ & T & T & T & T & T & T & F & T & T & F & --- \\ \hline
$i = \Cand_c(curr\_b, i+1)$ & T & T & T & T & T & F & N/A & F & F & N/A & --- \\ \hline
$\RowList$ & (1,A,0) & (2,G,0) & (7,G,2) & (5,C,4) & (10,\$,4) & (1,A,0) & (1,A,0) & (1,A,0) & (1,A,0) & (1,A,0) & --- \\ 
& & (1,A,0) & (1,A,0) & (7,G,2) & (5,C,4) & & & & & & \\ 
& & & & (1,A,0) & (7,G,2) & & & & & & \\ 
& & & & & (1,A,0) & & & & & & \\ \hline
$\Result_{\$}$ & & & & & 10 & 10 & 10 & 10 & 10 & 10 & 10 \\ \hline
$\Result_{A}$ & & & & & &  &  &  &  &  & 1 \\ \hline
$\Result_{C}$ & & & & & 5 & 5 & 5 & 5 & 5 & 5 & 5 \\ \hline
$\Result_G$ & & & & & 7 & 7 & 7 & 7 & 7 & 7 & 7 \\ \hline
\end{tabular}
\end{sidewaystable}

\newpage

\end{document}